\newtheorem{lemma}{Lemma}
\newtheorem{proposition}{Proposition}
\newtheorem{definition}{Definition}
\newtheorem{theorem}{Theorem}
\newenvironment{sketch}{%
  \proof}{\endproof}
\newcommand{\vertexsize}{0.7cm}
\newcommand{\smallvertexsize}{0.25cm}
\newcommand{\middlevertexsize}{0.5cm}
\tikzset{
vertex/.style={draw,rectangle,minimum size=\vertexsize,line width=0.1mm},
agent/.style={circle,draw,minimum size=0.5cm,inner sep=0mm,text=black},
agenthighest/.style={agent,fill={rgb:red,0;green,1;blue,3},text=white},
agentstuck/.style={agent,fill={rgb:red,3;green,1;blue,0},text=white},
agentchange/.style={agent,pattern=north east lines,pattern color={rgb:red,0;green,1;blue,3},text=black},
agentgreen/.style={agent,fill={rgb:green,1;blue,1},text=white},
dest/.style={circle,draw,black,densely dashed, minimum size=0.6cm,line width=0.1mm},
smalldest/.style={agent, densely dashed, minimum size=0.4cm},
line/.style={black},
move/.style={->, line, thick},
bt/.style={->, double},
destarrow/.style={->,thick},
state/.style={fill={black},circle,inner sep=0pt,minimum size=3pt},
w-arrow/.style={->,thick,color={rgb:red,3;yellow,1;blue,1}},
vertex-s/.style={vertex,minimum size=\smallvertexsize,line width=0.1mm},
vertex-s-obj/.style={vertex-s,fill=black},
agent-s/.style={agent,minimum size=0.2cm,fill=black},
dest-s/.style={smalldest,minimum size=0.1cm,line width=0.05mm},
destarrow-s/.style={->,line width=0.01cm},
vertex-m/.style={vertex,minimum size=\middlevertexsize,line width=0.1mm},
agent-m/.style={agent,minimum size=0.4cm,fill=black},
destarrow-m/.style={->,thick},
}
\title{Time-Independent Planning for Multiple Moving Agents}
\author{
    Keisuke Okumura,
    Yasumasa Tamura,
    Xavier D\'{e}fago \\
}
\begin{document}

\maketitle
\begin{abstract}
  Typical Multi-agent Path Finding (MAPF) solvers assume that agents move synchronously, thus neglecting the reality gap in timing assumptions, e.g., delays caused by an imperfect execution of asynchronous moves.
  So far, two policies enforce a robust execution of MAPF plans taken as input: either by forcing agents to synchronize or by executing plans while preserving temporal dependencies.
  This paper proposes an alternative approach, called time-independent planning, which is both online and distributed.
  We represent reality as a transition system that changes configurations according to atomic actions of agents, and use it to generate a time-independent schedule.
  Empirical results in a simulated environment with stochastic delays of agents' moves support the validity of our proposal.
\end{abstract}

\section{Introduction}
Multi-agent systems with physically moving agents are becoming gradually more common, e.g., automated warehouse~\cite{wurman2008coordinating}, traffic control~\cite{dresner2008multiagent}, or self-driving cars.
In such systems, agents must move smoothly without colliding.
This is embodied by the problem of Multi-agent Path Finding (MAPF)~\cite{stern2019multi}.
Planning techniques for MAPF have been extensively studied in recent years.

The output of such planning is bound to be executed in real-world situations with agents (robots).
Typical MAPF is defined in discrete time.
Agents are assumed to do two kinds of atomic actions synchronously:
move to a neighboring location or stay at their current location.
Perfect executions for the planning are however difficult to ensure since timing assumptions are inherently uncertain in reality, due to the difficulty of:
1)~accurately predicting the temporal behavior of many aspects of the system, e.g., kinematics,
2)~anticipating external events such as faults and interference, and
3)~ensuring a globally consistent and accurate notion of time in the face of clock shift and clock drift.
Even worse, the potential of unexpected interference increases with the number of agents,
hence the need to prepare for imperfect executions regarding the timing assumptions.

There are two intuitive ways to tackle imperfect executions of MAPF plans taken as input.
The first, and conservative idea is to forcibly synchronize agents' moves, globally or locally.
Most decentralized approaches to MAPF take this approach implicitly~\cite{wiktor2014decentralized,kim2015discof+,okumura2019priority,wang2020walk}.
This policy negatively affects the entire performance of the system with unexpected delays and lacks flexibility~\cite{ma2017multi}.
The second policy makes agents preserve temporal dependencies of the planning~\cite{honig2016multi,ma2017multi,honig2019persistent,atzmon2020robust}.
Two types of temporal dependencies exist:
1)~internal events within one agent and,
2)~order relation of visiting one node.
This policy is sound but still vulnerable to delays.
Consider an extreme example where one agent moves very slowly or crashes.
Due to the second type of dependencies, the locations where the agent will be are constrained by the use of the other agents.
Thus, the asynchrony of the movements is sensitive to the whole system.

We, therefore, propose an alternative approach, called time-independent planning, that aims at online and distributed execution focusing on agents' moves.
We represent the whole system as a transition system that changes configurations according to atomic actions of agents, namely,
1) request the next locations (\requesting),
2) move (\extended), and,
3) release the past locations, or stay (\contracted).
In this time-independent model, any \apriori knowledge for timings of atomic actions is unavailable, representing non-deterministic behaviors of the external environment.
The challenge is to design algorithms tolerant of all possible sequences of actions.

{
  \newcommand{\colwidth}{0.175\hsize}
\begin{figure*}[ht!]
  \centering
  \begin{tabular}{ccccc}
    \begin{minipage}[t]{\colwidth}
      \centering
      \begin{tikzpicture}
        \node[vertex](v1) at (0, 0) {};
        \node[vertex,right=0cm of v1](v2) {};
        \node[vertex,right=0cm of v2](v3) {};
        \node[vertex,right=0cm of v3](v4) {};
        \node[vertex,above=0cm of v1](v5) {};
        \node[vertex,right=0cm of v5](v6) {};
        \node[vertex,right=0cm of v6](v7) {};
        \node[vertex,right=0cm of v7](v8) {};
        \node[agenthighest](a0) at (v6) {$a_7$};
        \node[anchor=east](highlabel) at (0.47, 0.92) {\scriptsize {\tiny priority:}};
        \node[anchor=east](highlabel) at (0.47, 0.7) {\scriptsize \textbf{high}};
        \node[agent](a1) at (v7) {$a_6$};
        \node[agent](a2) at (v3) {$a_5$};
        \node[agent,label=below:{\scriptsize\color{white}\textbf{hg}}](a3) at (v4) {$a_4$};
        \node[agentstuck](a4) at (v8) {$a_3$};
        \node[agent,label=below:{\scriptsize low}](a5) at (v2) {$a_1$};
        \node[agent,label=below:{\scriptsize medium}](a6) at (v1) {$a_2$};
        \node[dest] (dest0) at (a1) {};
        \node[dest] (dest1) at (a2) {};
        \node[dest] (dest2) at (a3) {};
        \node[dest] (dest3) at (a4) {};
        \draw[move] (a0)--(dest0);
        \draw[move] (a1)--(dest1);
        \draw[move] (a2)--(dest2);
        \draw[move] (a3)--(dest3);
      \end{tikzpicture}
      \subcaption{priority inheritance (PI)}
      \label{fig:pibt:init}
    \end{minipage}
    &
    \begin{minipage}[t]{\colwidth}
      \centering
      \begin{tikzpicture}
        \node[vertex](v1) at (0, 0) {};
        \node[vertex,right=0cm of v1](v2) {};
        \node[vertex,right=0cm of v2](v3) {};
        \node[vertex,right=0cm of v3](v4) {};
        \node[vertex,above=0cm of v1](v5) {};
        \node[vertex,right=0cm of v5](v6) {};
        \node[vertex,right=0cm of v6](v7) {};
        \node[vertex,right=0cm of v7](v8) {};
        \node[agenthighest,label=above:{\scriptsize}](a0) at (v6) {$a_7$};
        \node[agent](a1) at (v7) {$a_6$};
        \node[agentchange](a2) at (v3) {$a_5$};
        \node[agent](a3) at (v4) {$a_4$};
        \node[agentstuck](a4) at (v8) {$a_3$};
        \node[agent,label=below:{\scriptsize low (as \textbf{high})}](a5) at (v2) {$a_1$};
        \node[agent](a6) at (v1) {$a_2$};
        \draw[bt] (a4)--(a3);
        \draw[bt] (a3)--(a2);
        \node[dest] (dest2) at (a5) {};
        \node[dest] (dest5) at (a6) {};
        \node[smalldest] (dest6) at (v5) {};
        \draw[move] (a2)--(dest2);
        \draw[move] (a5)--(dest5);
        \draw[move] (a6)--(dest6);
      \end{tikzpicture}
      \subcaption{backtracking and PI again}
      \label{fig:pibt:btpi}
    \end{minipage}
    &
    \begin{minipage}[t]{\colwidth}
      \centering
      \begin{tikzpicture}
        \node[vertex](v1) at (0, 0) {};
        \node[vertex,right=0cm of v1](v2) {};
        \node[vertex,right=0cm of v2](v3) {};
        \node[vertex,right=0cm of v3](v4) {};
        \node[vertex,above=0cm of v1](v5) {};
        \node[vertex,right=0cm of v5](v6) {};
        \node[vertex,right=0cm of v6](v7) {};
        \node[vertex,right=0cm of v7](v8) {};
        \node[agenthighest](a0) at (v6) {$a_7$};
        \node[agent](a1) at (v7) {$a_6$};
        \node[agentchange](a2) at (v3) {$a_5$};
        \node[agent,label=below:{\scriptsize\color{white}\textbf{hg}}](a3) at (v4) {$a_4$};
        \node[agent](a4) at (v8) {$a_3$};
        \node[agent](a5) at (v2) {$a_1$};
        \node[agent](a6) at (v1) {$a_2$};
        \draw[bt] (a6)--(a5);
        \draw[bt] (a5)--(a2);
        \draw[bt] (a2)--(a1);
        \draw[bt] (a1)--(a0);
      \end{tikzpicture}
      \subcaption{backtracking}
      \label{fig:pibt:bt}
    \end{minipage}
    &
    \begin{minipage}[t]{\colwidth}
      \centering
      \begin{tikzpicture}
        \node[vertex](v1) at (0, 0) {};
        \node[vertex,right=0cm of v1](v2) {};
        \node[vertex,right=0cm of v2](v3) {};
        \node[vertex,right=0cm of v3](v4) {};
        \node[vertex,above=0cm of v1](v5) {};
        \node[vertex,right=0cm of v5](v6) {};
        \node[vertex,right=0cm of v6](v7) {};
        \node[vertex,right=0cm of v7](v8) {};
        \node[agenthighest](a0) at (v7) {$a_7$};
        \node[agent](a1) at (v3) {$a_6$};
        \node[agent](a2) at (v2) {$a_5$};
        \node[agent,label=below:{\scriptsize\color{white}\textbf{hg}}](a3) at (v4) {$a_4$};
        \node[agentstuck](a4) at (v8) {$a_3$};
        \node[agent](a5) at (v1) {$a_1$};
        \node[agent](a6) at (v5) {$a_2$};
      \end{tikzpicture}
      \subcaption{one timestep later}
      \label{fig:pibt:result}
    \end{minipage}
    &
    \begin{minipage}[t]{\colwidth}
      \centering
      \begin{tikzpicture}
        \node[vertex](v1) at (0, 0) {};
        \node[vertex,right=0cm of v1](v2) {};
        \node[vertex,right=0cm of v2](v3) {};
        \node[vertex,right=0cm of v3](v4) {};
        \node[vertex,above=0cm of v1](v5) {};
        \node[vertex,right=0cm of v5](v6) {};
        \node[vertex,right=0cm of v6](v7) {};
        \node[vertex,right=0cm of v7](v8) {};
        \node[agenthighest](a0) at (v6) {$a_7$};
        \node[agent](a1) at (v7) {$a_6$};
        \node[agent](a2) at (v3) {$a_5$};
        \node[agent,label=below:{\scriptsize\color{white}\textbf{hg}}](a3) at (v4) {$a_4$};
        \node[agentstuck](a4) at (v8) {$a_3$};
        \node[agent](a5) at (v2) {$a_1$};
        \node[agent](a6) at (v1) {$a_2$};
        \draw[move] (a0)--(a1);
        \draw[move] (a1)--(a2);
        \draw[move] (a2)--(a3);
        \draw[move] (a3)--(a4);
        \draw[move] (a2)--(a5);
        \draw[move] (a5)--(a6);
        \node[smalldest] (d6) at (v5) {};
        \draw[move, dotted] (a6)--(d6);
      \end{tikzpicture}
      \subcaption{as DFST}
      \label{fig:pibt:tree}
    \end{minipage}
  \end{tabular}
  \caption{Example of PIBT.
    Requests for the next timestep are depicted by dashed circles, determined greedily according to agents' destinations (omitted here).
    Flows of priority inheritance and backtracking are drawn as single-line and doubled-line arrows, respectively.
    First, $a_7$ (blue agent) determines the next desired node (current location of $a_6$).
    Then, priority inheritance happens from $a_7$ to $a_6$, making $a_7$ wait for backtracking and $a_6$ start planning;
    $a_6$, $a_5$ and $a_4$ do the same.
    $a_3$ (red), however, is stuck (\ref{fig:pibt:init}).
    Thus, $a_3$ backtracks as invalid to $a_4$ (\ref{fig:pibt:btpi}).
    $a_4$ tries to replan, however, $a_4$ is also stuck hence $a_4$ sends backtracking as invalid to $a_5$  (blue with diagonal lines).
    $a_5$, with success replanning, executes other priority inheritance to $a_1$ (\ref{fig:pibt:btpi}).
    Finally, $a_1, a_5, a_6$ and $a_7$ receives backtracking as valid (\ref{fig:pibt:bt}) and then start moving (\ref{fig:pibt:result}).
    Fig.~\ref{fig:pibt:tree}: virtual depth first search tree for this example, for explanation of \algoname.
    $a_7$ is a root.
    Solid arrows are drawn from a parent to a child.
    $a_2$ finds an empty node.
  }
 \label{fig:pibt}
\end{figure*}
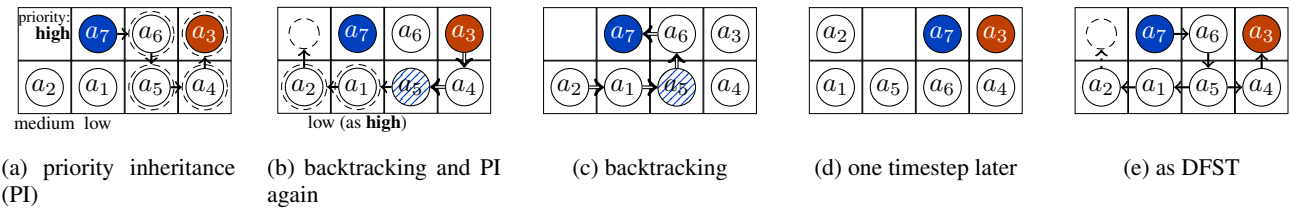
}

The main contributions of this paper are:
1)~the formalization of the time-independent model and \algoname, a proposed time-independent planning with guaranteed \emph{reachability}, i.e., all agents are ensured to reach their destinations within finite time.
\algoname, as a proof-of-concept, extends a recently-developed decoupled approach that solves MAPF iteratively, Priority Inheritance with Backtracking (PIBT)~\cite{okumura2019priority}.
We also present how an offline MAPF plan enhances \algoname.
2)~experimental results demonstrating the validity and robustness of the proposal through the simulation with stochastic delays of agents’ moves, using MAPF-DP (with Delay Probabilities)~\cite{ma2017multi}.

The remainder of this paper consists of the following five sections:
1) preliminary including the formalization of MAPF and related work,
2) the time-independent model,
3) examples of time-independent planning,
4) empirical results of the proposals using MAPF-DP, and
5) conclusion and future directions.

\section{Preliminaries}
\label{sec:preliminary}
This section first defines MAPF.
Then, we explain the MAPF variant emulating asynchrony of movements, called MAPF-DP, which we later use in experiments.
We also explain two policies that execute MAPF plans and PIBT, the original form of \algoname.

\subsection{MAPF}
In an environment represented as a graph $G = (V, E)$, the MAPF problem is defined as follows.
Let $\loc{i}{t} \in V$ denote the location of an agent $a_i$ at discrete time~$t \in \mathbb{N}$.
Given distinct initial locations $\loc{i}{0}\in V$ and destinations $g_i \in V$ for each agent, assign a path $\path{i} = (\loc{i}{0}, \loc{i}{1}, \dots, \loc{i}{T})$ such that $\loc{i}{T}=g_i$ to each agent minimizing some objective function (see below).

At each timestep $t$, $a_i$ can move to an adjacent node, or, can stay at its current location, i.e., $\loc{i}{t+1} \in \Neigh{\loc{i}{t}} \cup \{ \loc{i}{t} \}$, where \Neigh{v} is the set of nodes neighbor to $v \in V$.
Agents must avoid two types of conflicts~\cite{stern2019def}:
1) \emph{vertex conflict}: $\loc{i}{t} \neq \loc{j}{t}$, and,
2) \emph{swap conflict}: $\loc{i}{t} \neq \loc{j}{t+1} \lor \loc{i}{t+1} \neq \loc{j}{t}$.

Two kinds of objective functions are commonly used to evaluate MAPF solutions:
1)~\emph{sum of cost} (SOC), where the cost is the earliest timestep $T_i$ such that $\loc{i}{T_i}=g_i, \dots,\loc{i}{T}=g_i, T_i \leq T$,
2)~\emph{makespan}, i.e., $T$.
This paper focuses on the sum of cost (SOC).


\subsection{MAPF-DP (with Delay Probabilities)}
MAPF-DP~\cite{ma2017multi} emulates imperfect executions of MAPF plans by introducing the possibility of unsuccessful moves.
Time is still discrete.
At each timestep, an agent $a_i$ can either stay in place or move to an adjacent node with a probability $p_i$ of being unsuccessful.
The definition of conflicts is more restrictive than with normal MAPF:
1) vertex conflict is as defined in MAPF, and
2) \emph{following conflict}: $\loc{i}{t+1} \neq \loc{j}{t}$.
The rationale is that, without the later restriction, two agents might be in the same node due to one failing to move.
Note that following conflict contains swap conflict.

\subsubsection{Execution Policies}
Ma \etal~\cite{ma2017multi} studied two robust execution policies using MAPF plans for the MAPF-DP setting.
The first one, called \emph{Fully Synchronized Policies (FSPs)},
synchronizes the movements of agents globally, i.e., $a_i$ waits to move to \loc{i}{t+1} ($\neq \loc{i}{t}$) until all move actions of \loc{j}{t^\prime}, $t^\prime \leq t$ are completed.
The second approach, called \emph{Minimal Communication Policies (MCPs)}, executes a plan while maintaining its temporal dependencies.
There are two kinds of dependencies:
1) \emph{internal events}, i.e., the corresponding action of \loc{i}{t} is executed prior to that of \loc{i}{t+1}, and
2) \emph{node-related events}, i.e., if $\loc{i}{t} = \loc{j}{t^\prime}$ and $t < t^\prime$, the event of \loc{i}{t} is executed prior to that of \loc{j}{t^\prime}.
As long as an MAPF plan is valid, both policies make agents reach their destinations without conflicts, despite of delay probabilities.

\begin{figure*}[t]
 \centering
 \begin{tikzpicture}
  \node[vertex](v3) at (0, 0) {};
  \node[]() at (0.23, 0.25) {\scriptsize $v_3$};
  \node[vertex](v1) at (-\vertexsize, 0) {};
  \node[]() at (-0.45, 0.25) {\scriptsize $v_1$ };
  \node[vertex](v2) at (0, \vertexsize) {};
  \node[]() at (0.23, 0.95) {\scriptsize $v_2$ };
  \node[vertex](v4) at (0, -\vertexsize) {};
  \node[]() at (0.23, -0.45) {\scriptsize $v_4$ };
  \node[vertex](v5) at (\vertexsize, 0) {};
  \node[]() at (0.93, 0.25) {\scriptsize $v_5$ };
  \node[agenthighest](a1) at (v1) {$a_1$};
  \node[agentgreen](a2) at (v2) {$a_2$};
  \draw[destarrow,draw={rgb:red,0;green,1;blue,3}] (a1) -- (\vertexsize,0);
  \draw[destarrow,draw={rgb:green,1;blue,1}] (a2) -- (0,-\vertexsize);
  \node[anchor=west] at (1.2, 0) {
    \setlength{\tabcolsep}{2.55mm}
    \small
    \begin{tabular}{r||cc|c:c:cc:c:c:c:c:c:c:c:c:c:c}
      \toprule
      state
      & $\sigma_1$ & $\sigma_2$  
      & $\sigma_1$ & $\sigma_2$
      & $\sigma_1$ & $\sigma_2$  
      & $\sigma_2$ & $\sigma_1$ & $\sigma_1$ & $\sigma_1$ & $\sigma_1$ & $\sigma_2$ & $\sigma_2$
      & $\sigma_2$ & $\sigma_2$ & $\sigma_2$\\
      \midrule
      \mode{}& \mc & \mc & \mr & \mr & \me & \mc & \mr & \mc & \mr & \me & \mc & \me & \mc & \mr & \me & \mc \\
      \head{}& $\bot$ & $\bot$ & $v_3$ & $v_3$ & $v_3$ & $\bot$ & $v_3$ & $\bot$ & $v_5$ & $v_5$ & $\bot$ & $v_3$ & $\bot$ & $v_4$ & $v_4$ & $\bot$ \\
      \tail{}& $v_1$ & $v_2$ & $v_1$ & $v_2$ & $v_1$ & $v_2$ & $v_2$ & $v_3$ & $v_3$ & $v_3$ & $v_5$ & $v_2$ & $v_3$ & $v_3$ & $v_3$ & $v_4$ \\
      \bottomrule
    \end{tabular}
  };
  \node[anchor=east]() at (5.5, 1.1) {\small activated agent};
  \node[anchor=west]() at (7.7, 1.1) {\small interacted agent};
  \coordinate[](activated) at (5.5, 1.1);
  \coordinate[](s1-3) at (6.25, 1.0);
  \coordinate[](s2-2) at (5.6, 0.95);
  \draw[->] (activated) -- (s1-3);
  \draw[->] (activated) -- (s2-2);
  \coordinate[](interacted) at (7.75, 1.1);
  \coordinate[](s2-3) at (7.4, 0.95);
  \draw[->] (interacted) -- (s2-3);
  \node[anchor=north] at (7.0,-0.82) {\small interaction};
  \node[anchor=north] at (3.6,-0.82) {\small initial states};
  \node[anchor=south] at (13.9,0.8) {\small $\mc$: \contracted, $\mr$: \requesting, $\me$: \extended};
\end{tikzpicture}
 \caption{
   Example of an execution of the model.
   $a_1$ goes to $v_5$ and $a_2$ goes to $v_4$.
   In the table, time progresses from left to right.
   There is an interaction, which makes $a_2$ be back to \contracted.
 }
 \label{fig:execution}
\end{figure*}
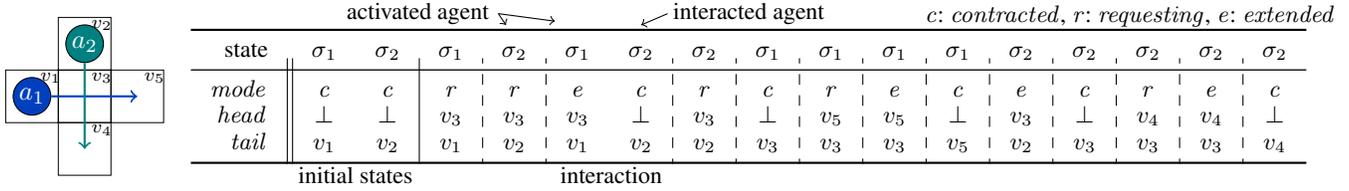

\subsection{Priority Inheritance with Backtracking (PIBT)}
PIBT~\cite{okumura2019priority} repeats one-timestep planning until it terminates, aiming at solving MAPF iteratively (without delays).
Unlike complete solvers for MAPF, PIBT does not ensure that all agents are on their goals simultaneously, rather it ensures \emph{reachability}; all agents are ensured to reach their destinations eventually.
Thus, an agent that reaches its goal potentially moves from there while other agents are moving.
Reachability plays a crucial role in situations where destinations are given continuously such as lifelong MAPF~\cite{ma2017lifelong} because all tasks (i.e., destinations) assigned to agents are ensured to be completed.

In PIBT, for each timestep, agents are provided with unique priorities and they sequentially determine their next locations in decreasing order of priorities while avoiding to use nodes that have requested by higher-priority agents.
\emph{Priority inheritance}, originally considered in resource scheduling problems~\cite{sha1990priority}, is introduced;
when a low-priority agent~$X$ obstructs the movement of a higher-priority agent~$Y$, $X$ temporarily inherits the higher-priority of $Y$.
Priority inheritance can be applied iteratively.
Agents giving their priority ($Y$) must wait for \textit{backtracking} from agents inheriting the priority ($X$).
Backtracking has two outcomes: valid or invalid.
A \emph{valid} situation occurs when $Y$ can successfully move to the location of $X$ in the next timestep.
An \emph{invalid} situation occurs when $X$ is stuck, i.e., neither going anywhere nor staying at its current location without colliding, forcing $Y$ to replan its path.
Fig.~\ref{fig:pibt} shows an example of PIBT in one timestep.

With these protocols, the agent with highest priority is ensured to move to an arbitrary neighbor node if the graph satisfies the adequate property, e.g., biconnected.
Subsequently, such an agent moves to its goal along the shortest path to its goal.
To ensure reachability, PIBT uses dynamic priorities, where the priority of an agent increments gradually at each timestep until it drops upon reaching its goal, meaning that, agents not reaching their goals yet eventually get the highest.

\section{Time-Independent Model}
\label{sec:system}
The time-independent model and the terminology used in this paper are partly inspired by the model for distributed algorithms with synchronous message passing~\cite{tel2000introduction} and the Amoebot model~\cite{derakhshandeh2014brief}, an abstract computational model for programmable matter.
Our model is however different in that agents are physically embodied and move on a graph, have destinations, and know their locations.

\paragraph{Components}
The system consists of a set of agents $A = \{a_1, \dots, a_n \}$ and a graph $G=(V,E)$.
We assume that each agent knows $G$ to plan their respective paths.

\paragraph{Configuration and State}
The whole system is represented as a transition system according to atomic actions of agents.
Each agent $a_i$ is itself a transition system with its \emph{state}, denoted as $\sigma_i$, consisting of its internal variables, its current location, and destination.
A \emph{configuration} $\gamma = (\sigma_1, \dots, \sigma_n)$ of the whole system at a given time consists of the states of all agents at that time.
A change in the state of some agents causes a change of configuration of the system, e.g., $\gamma = (\dots, \sigma_{i-1}, \sigma_i, \sigma_{i+1}, \dots) \rightarrow \gamma^\prime = (\dots, \sigma_{i-1}, \sigma_i^{\prime}, \sigma_{i+1},\dots)$.

\paragraph{Mode}
Each agent at any time occupies at least one node and occupies two nodes during moves between nodes.
We use two terms to represent agents' locations:
$\tail{i} \in V$ and $\head{i} \in V \cup \{ \bot \}$, where $\bot$ is void.
They are associated with a mode \mode{i} which can be: \contracted, \requesting, and \extended.
\begin{itemize}
  \item \contracted: $a_i$ stays at one node \tail{i}, and $\head{i} = \bot$; location of $a_i$ is \tail{i}
  \item \requesting: $a_i$ attempts to move to $\head{i}\not=\bot$, being at \tail{i}; location of $a_i$ is \tail{i}
  \item \extended: $a_i$ is moving from \tail{i} to \head{i}; locations of $a_i$ are both \tail{i} and \head{i}.
\end{itemize}
Initially, all agents are \contracted and have distinct initial locations.
\head{i} ($\neq \bot$) is always adjacent to \tail{i}.

\paragraph{Transition}
Agents move by changing modes.
These transitions are accompanied by changing \tail{} and \head{} as described below.
\begin{itemize}
  \item  from \contracted, $a_i$ can become \requesting by setting \head{i} to $u \in \Neigh{\tail{i}}$, to move to a neighbor node.
  \item from \requesting, $a_i$ can revert to \contracted by changing its \head{i} to $\bot$.
  \item from \requesting, agent $a_i$ can become \extended when $\lnot \occupied{\head{i}}$,
    where $\occupied{v}$ holds when there is no agent $a_j$ such that $\tail{j} = v$ and no agent $a_j$ in \extended such that $\head{j} = v$.
  \item from \extended, $a_i$ can become \contracted, implying that the movement is finished.
    This comes with $\tail{i} \leftarrow \head{i}$, then $\head{i} \leftarrow \bot$.
\end{itemize}
Transitions are atomic so, e.g., if agents in \contracted become \extended through \requesting, at least two actions are required; see an example later.
Other modes transitions are disallowed, e.g., an agent in \contracted cannot become \extended directly.

\paragraph{Conflict-freedom and Deadlock}
In the above transition rules, the model implicitly prohibits vertex and following conflicts of MAPF.
Rather, the model is prone to \emph{deadlocks};
A set of agents $\{ a_k, \ldots, a_l \}$ are in a deadlock when all of them are \requesting and are in a cycle $\head{k} = \tail{k+1}$, $\cdots$, $\head{l} = \tail{k}$.

\paragraph{Activation}
Agents perform a single atomic action as a result of being \emph{activated}. The nature and outcome of the action depend on the state of the agent (e.g., mode transition, local variable update). Activations occur non-deterministically, and there is no synchronization between agents.
For simplicity, we assume that at most one agent is activated at any time. In other words, the simultaneous activation of two agents $a_i$ and $a_j$ results in a sequence
$(\dots, \sigma_i, \dots, \sigma_j, \dots) \rightarrow
(\dots, \sigma_i^\prime, \dots, \sigma_j, \dots) \rightarrow
(\dots, \sigma_i^\prime, \dots, \sigma_j^\prime, \dots)$.
Activations are supposed to be \emph{fair} in the sense that,
in any sufficiently long period, all agents must be activated at least once.

\paragraph{Interaction}
An activation may affect not only variables of the activated agent, but also affect nearby agents indirectly.
For instance, if two \requesting agents have the same $\mathit{head}$, one wins and becomes \extended whereas the other loses and becomes \contracted, atomically.
This type of activation is called an \emph{interaction}.
Interactions include activations such that the activated agents change their variables referring to the variables of other agents.
We say that the agents involved in the interaction, except $a_i$ itself, are \emph{interacted agents}.
Given an activated agent $a_i$ and an interacted agent $a_j$, the system transitions as follows: $(\dots, \sigma_i, \dots, \sigma_j, \dots)  \rightarrow (\dots, \sigma_i^\prime, \dots, \sigma_j^\prime, \dots)$ with the state of all other agents unchanged.
Except for interactions, the configuration is changed by the state change of a single agent.
We assume that interactions are performed by communication between agents, but the detailed implementation is not relevant to this paper. 

\paragraph{Termination}
Assuming that each agent $a_i$ has its own destination $g_i \in V$, termination can be defined in two different ways.
\begin{itemize}
\item \emph{Strong termination} occurs when reaching a configuration such that $\mode{i} = \contracted \land \tail{i} = g_i$ for any agent~$a_i$.
\item \emph{Weak termination} is when all agents have been at least once in a state where $\mode{i} = \contracted \land \tail{i} = g_i$.
\end{itemize}
Strong termination corresponds to usual MAPF termination, whereas weak termination corresponds to the reachability property of PIBT. Strong termination implies weak termination.

In this paper, we refer to weak termination as \emph{reachability}.
Note that the properties of either deadlock-freedom or deadlock-recovery are required to ensure reachability.


\paragraph{Remarks}
Figure~\ref{fig:execution} illustrates an execution in the time-independent model.
Although the example focuses on (classical) MAPF, many other problem variants, e.g., iterative MAPF~\cite{okumura2019priority}, can be addressed simply by adapting termination and goal assignments.

\section{Algorithm}
\label{sec:algorithm}
This section presents two examples of time-independent planning: \greedy and \algoname.
We also present how to enhance \algoname with offline MAPF plans.

\subsection{Greedy Approach}
\greedy performs only basic actions;
it can be a template for another time-independent planning.
We simply describe its implementation for $a_i$ as follows.
\begin{itemize}
\item when \contracted:
  Choose the nearest node to $g_i$ from \Neigh{\tail{i}} as new \head{i}, then become \requesting.
\item when \requesting:
  Become \extended when the \head{i} is unoccupied, otherwise, do nothing.
\item when \extended:
  Become \contracted.
\end{itemize}
Obviously, \greedy is prone to deadlocks, e.g, when two adjacent agents try to swap their locations, they block eternally.
The time-independent planning without deadlock-freedom or deadlock-recovery properties is impractical, motivating the next algorithm.

\subsection{Causal-PIBT}
The \algoname{} algorithm extends both algorithms PIBT and \greedy.
Although PIBT is timing-based relying on synchronous moves, \algoname is event-based relying on causal dependencies of agents' actions.

\subsubsection{Concept}
Two intuitions are obtained from PIBT to design a time-independent algorithm that ensures reachability;
1)~Build a depth-first search tree rooted at the agent with highest priority, using the mechanism of priority inheritance and backtracking.
2)~Drop priorities of agents that arrive at their goals to give higher priorities to all agents not having reached their goals yet, thus resulting in that all agents eventually reach their goals.
We complement the first part as follows.

The path adjustment of PIBT in one timestep can be seen as the construction of a (virtual) depth-first search tree consisting of agents and their dependencies.
The tree is rooted at the first agent starting priority inheritance, i.e., locally highest priority agent, e.g., $a_7$ in Fig.~\ref{fig:pibt}.
When an agent $a_j$ inherits a priority from another agent $a_i$, $a_j$ becomes a \emph{child} of $a_i$ and $a_i$ becomes its \emph{parent}.
We show this virtual tree in Fig.~\ref{fig:pibt:tree}.
Once an empty node adjacent to the tree is found, all agents on the path from the root to that empty node can move toward one step forward
($a_2, a_1, a_5, a_6$ and $a_7$).
This enables the agent with highest priority, being always a root, to move from the current node to any arbitrary neighbor nodes.
The invalid outcome in backtracking works as backtracking in a depth-first search,
while the valid outcome notifies the search termination.

\begin{algorithm}[ht!]
 \caption{\algoname}
 \label{algo:causal}
 {\small
 \begin{algorithmic}
  \State $\parent{i} \in A$: initially $a_i$; $\children{i} \subset A$: initially $\emptyset$
  \State $\pori{i}$: original priority; $\ptmp{i}$: temporal priority, initially \pori{i}
  \State $C_i \subseteq V$: candidate nodes, initially $\Neigh{\tail{i}} \cup \{ \tail{i} \}$
  \State $S_i \subseteq V$: searched nodes, initially $\emptyset$
  \vspace{0.05cm}
 \end{algorithmic}
 \begin{algorithmic}[1]
  \When{$\mode{i} = \contracted$}
  \If{$C_i = \emptyset \land \parent{i} = a_i$}
  \label{algo:causal:detect-searchend}
  \State \Call{ReleaseChildren}{}, \Call{Reset}{}
  \EndIf
  \label{algo:causal:detect-searchend-end}
  \State \Call{PriorityInheritance}{}
  \label{algo:causal:pi-cont}
  \If{$C_i = \emptyset$}
  \label{algo:causal:detect-empty}
  \State let $a_j$ be \parent{i}
  \If{$\head{j} = \tail{i}$}
  \label{algo:causal:bt-start}
  \State $S_j \leftarrow S_j \cup S_i$, $C_j \leftarrow C_j \setminus S_j$
  \label{algo:causal:backprop}
  \Comment with $a_j$
  \State $\head{j} \leftarrow \bot$, $\mode{j} \leftarrow \contracted$
  \EndIf
  \label{algo:causal:bt-end}
  \State \Return
  \EndIf
  \label{algo:causal:detect-empty-end}
  \State $u \leftarrow$~the nearest node to $g_i$ in $C_i$
  \label{algo:causal:node-select}
  \If{$u = \tail{i}$}
  \State \Call{ReleaseChildren}{}, \Call{Reset}{}
  \State \Return
  \EndIf
  \State $C_i \leftarrow C_i \setminus \{ u \}$, $S_i \leftarrow S_i \cup \{ u, \tail{i} \}$
  \label{algo:causal:update-cs}
  \State $\head{i} \leftarrow u$, $\mode{i} \leftarrow \requesting$
  \label{algo:causal:goto-requesting}
  \EndWhen
  \vspace{0.05cm}
  \When{$\mode{i} = \requesting$}
  \State \Call{PriorityInheritance}{}
  \label{algo:causal:pi-ext}
  \If{$\parent{i} \neq a_i \land \head{i} \in S_{\parent{i}}$}
  \label{algo:causal:backcont-s}
  \Comment{\parent{i}}
  \State $\head{i} \leftarrow \bot$, $\mode{i} \leftarrow \contracted$
  \State \Return
  \EndIf
  \label{algo:causal:backcont-e}
  \Ifsingle{$\occupied{\head{i}}$}{\Return}
  \label{algo:causal:pass-ext}
  \State {\scriptsize $A^\prime \leftarrow \{ a_j \;|\; a_j \in A,
  \mode{j} = \requesting, \head{j} = \head{i}\}$}
  \State $a^\star \leftarrow \argmax_{a_j \in A^\prime}\mathit{ptmp}_{j}$
  \For{$a_j \in A^\prime \setminus \{ a^\star \}$}
  \Comment agents in $A^\prime$
  \State $\head{j} \leftarrow \bot$, $\mode{j} \leftarrow \contracted$
  \EndFor
  \label{algo:causal:winner-determination-end}
  \Ifsingle{$a^\star \neq a_i$}{\Return}
  \label{algo:causal:if-winner}
  \State {\footnotesize $\children{\parent{i}} \leftarrow \children{\parent{i}} \setminus \{ a_i \}$}
  \Comment \parent{i}
  \State $\parent{i} \leftarrow a_i$
  \State \Call{ReleaseChildren}{}
  \State $\mode{i} \leftarrow extended$
  \label{algo:causal:if-winner-end}
  \EndWhen
  \vspace{0.05cm}
  \When{$\mode{i} = \extended$}
  \State $\tail{i} \leftarrow \head{i}$, $\head{i} \leftarrow \bot$, $\mode{i} \leftarrow \contracted$
  \State update \pori{i}, \Call{Reset}{}
  \label{algo:causal:updatepori}
  \EndWhen
\end{algorithmic}
}
\end{algorithm}

\subsubsection{Description}
\algoname basically performs \greedy, i.e., at each activation an agent $a_i$ tries to move towards its goal; in addition, \algoname uses priority inheritance.
When $a_i$ is \contracted or \requesting, $a_i$ potentially inherits priority from another agent $a_j$ in \requesting with $\head{j} = \tail{i}$.
We now explain the details.
The pseudocode is split into Algorithm~\ref{algo:causal} and~\ref{algo:procs}.
Interactions and the corresponding interacted agents are explicitly marked in comments.
In Algo.~\ref{algo:causal}, procedures with activation are denoted for each mode.

\smallskip
\noindent
\underline{Variants:}
We first introduce local variants of $a_i$.
\begin{itemize}
  \item \parent{i} and \children{i}: for maintaining tree structures.
    $a_i$ is a root when $\parent{i} = a_i$.
    The algorithm updates these variants so that $a_i = \parent{j} \Leftrightarrow a_j \in \children{i}$.
  \item $C_i$ and $S_i$: for searching unoccupied neighbor nodes.
    $C_i$ is candidate nodes of next locations.
    $a_i$ in \contracted selects \head{i} from $C_i$.
    $S_i$ represents already searched nodes by a tree to which $a_i$ belongs.
    $S_i$ is propagated in the tree.
    $C_i$ is updated to be disjoint from $S_i$.
  \item \pori{i} and \ptmp{i}: priorities.
    They are components of a total order set.
    \ptmp{i} is basically equal to \pori{i}, however, it is changed by priority inheritance.
    $\ptmp{i} \geq \pori{i}$ in any time, and only \ptmp{i} is used for interaction.
\end{itemize}

\smallskip
\noindent
\underline{Structure:}
The procedures in \contracted consist of:
\begin{itemize}
  \item Relaying release [Line~\ref{algo:causal:detect-searchend}--\ref{algo:causal:detect-searchend-end}].
    An agent in stuck like $a_4$ in Fig.~\ref{fig:pibt:tree} is ``released'' when its parent ($a_5$) moves.
    If it has children ($a_3$), it needs to relay the release to its children, then initialize.
  \item Priority inheritance [Line~\ref{algo:causal:pi-cont}].
    When $a_i$ is activated and there exists an agent $a_j$ with higher priority such that $\head{j} = \tail{i}$, then priority inheritance happens from $a_j$ to $a_i$, e.g., from $a_7$ and $a_6$ in Fig.~\ref{fig:pibt:init}.
  \item Backtracking [Line~\ref{algo:causal:detect-empty}--\ref{algo:causal:detect-empty-end}].
    When $a_i$ detects stuck ($C_i = \emptyset$), making its parent $a_j$ \contracted while propagating already searched nodes $S_i$, then stops.
    This procedures correspond to invalid case of backtracking in PIBT, e.g., from $a_3$ to $a_4$ in Fig.~\ref{fig:pibt:btpi}.
  \item Prioritized planning [Line~\ref{algo:causal:node-select}--\ref{algo:causal:goto-requesting}].
    Pickup one node $u$ for the next location from $C_i$, update searched nodes, and become \requesting.
    Initialize when $u$ is \tail{i}.
\end{itemize}
The procedures in \requesting consist of:
\begin{itemize}
  \item Priority inheritance [Line~\ref{algo:causal:pi-ext}].
    The condition is the same as priority inheritance in \contracted.
    This implicitly contributes to detecting deadlocks.
  \item Deadlock detection and resolving [Line~\ref{algo:causal:backcont-s}--\ref{algo:causal:backcont-e}].
    This part detects the circle of requests, then go back to \contracted.
  \item Winner determination [Line~\ref{algo:causal:pass-ext}--\ref{algo:causal:winner-determination-end}].
    When \head{i} is unoccupied, there is a chance to move there.
    First, identify agents requesting the same node then select one with highest priority as a winner.
    Let losers back to be contracted.
    See ``interaction'' in Fig.~\ref{fig:execution}.
  \item Preparation for moves [Line~\ref{algo:causal:if-winner}--\ref{algo:causal:if-winner-end}].
    If $a_i$ itself is a winner, after isolating itself from a tree belonged to, $a_i$ becomes \extended.
\end{itemize}
The procedures in \extended are to update priority and back to \contracted.

\smallskip
\noindent
\underline{Subprocedures:}
We use three subprocedures, as shown in Algo.~\ref{algo:procs}.
\textsc{PriorityInheritance} first determines whether priority inheritance should occur [Line~\ref{algo:procs:check-start}--\ref{algo:procs:check-end}], then updates the tree structure and inherits both the priority and the searched nodes from the new parent [Line~\ref{algo:procs:update-tree}--\ref{algo:procs:end-inheriting}].
\textsc{ReleaseChildren} just cuts off the relationship with the children of $a_i$.
\textsc{Reset} initializes the agent's status.

\smallskip
\noindent
\underline{Overview:}
\algoname constructs depth-first search trees, each rooted at the agents with locally highest priorities, using \parent{} and \children{}.
They are updated through priority inheritance and backtracking mechanisms.
All agents in the same tree have the same priority \ptmp{}, equal to the priority \pori{} of the rooted agent.
When a tree with higher priority comes in contact with a lower-priority tree (by some agent becoming \requesting), the latter tree is decomposed and is partly merged into the former tree (by \textsc{PriorityInheritance}).
In the reverse case, the tree with lower priority just waits for the release of the area.
Backtracking occurs when a child has no candidate node to move due to requests from agents with higher priorities [Line~\ref{algo:causal:bt-start}--\ref{algo:causal:bt-end} in Algo.~\ref{algo:causal}].

\smallskip
\noindent
\underline{Dynamic Priorities:}
We assume that \pori{i} is unique between agents in any configuration and updated upon reaching a goal to be lower than priorities of all agents who have not yet reached their goals [Line~\ref{algo:causal:updatepori} in Algo.~\ref{algo:causal}].
This is realized by a prioritization scheme similar to PIBT.

\paragraph{Properties}
Next, two useful properties of \algoname are shown: deadlock-recovery and the reachability.
We also state that \algoname is quick enough as atomic actions in terms of time complexity.
\begin{theorem}[deadlock-recovery]
 \label{theorem:deadlock-recovery}
 \algoname ensures no deadlock situation can last forever.
\end{theorem}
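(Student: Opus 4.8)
The plan is a proof by contradiction. Suppose that from some time $t_{0}$ on a deadlock persists: agents $a_{i_{1}},\dots,a_{i_{r}}$ are all \requesting with $\head{i_{j}}=\tail{i_{j+1}}$ (indices taken cyclically), and none of them ever changes mode again. First I would identify which lines of Algorithm~\ref{algo:causal} could change the mode of such an $a_{i_{j}}$, and eliminate all but one. The node $\head{i_{j}}$ stays occupied by $a_{i_{j+1}}$, so $a_{i_{j}}$ can never pass the occupancy test to become \extended, and it is never forced back to \contracted as a loser, since winner determination is only reached for a free head and hence is never triggered on $a_{i_{j}}$; moreover $\head{i_{j}}=\tail{i_{j+1}}$ would have to coincide with the tail of a second agent for the backtracking branch at Line~\ref{algo:causal:bt-start} to revert it, which a vertex conflict forbids. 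Hence the only event that can dissolve the cycle is that some $a_{i_{j}}$, when activated, passes the deadlock-detection test $\parent{i_{j}}\neq a_{i_{j}}\land\head{i_{j}}\in S_{\parent{i_{j}}}$ at Line~\ref{algo:causal:backcont-s}; it suffices to show this must occur.

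Second I would track $\ptmp{}$ and $S$ of the cycle agents along a fair continuation. Every activation of a cycle agent calls \textsc{PriorityInheritance}, which can only increase $\ptmp{}$ (a strictly larger priority is inherited) and only enlarge $S$ (the new parent's searched set is unioned in); while the cycle lasts no cycle agent reaches \textsc{Reset}, so neither quantity ever decreases. Since priorities live in a fixed finite total order and $G$ is finite, fairness drives all the $\ptmp{i_{j}}$ and $S_{i_{j}}$ to a fixed point. At that fixed point no inheritance can still change a cycle agent, so for each $j$ the \requesting agent $a_{i_{j-1}}$, which requests $\tail{i_{j}}$, must satisfy $\ptmp{i_{j-1}}\le\ptmp{i_{j}}$; running once around the cycle forces $\ptmp{i_{1}}=\dots=\ptmp{i_{r}}=:P$. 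The fixed point also forces, for every $j$, that either $\parent{i_{j}}=a_{i_{j}}$ or $\head{i_{j}}\notin S_{\parent{i_{j}}}$ (otherwise Line~\ref{algo:causal:backcont-s} would fire). Because $\ptmp{i_{j}}=\pori{i_{j}}$ whenever $a_{i_{j}}$ is a root and original priorities are unique, at most one cycle agent can be a root.

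The contradiction then comes from reconstructing how $P$ was installed. We have $P=\pori{a}$ for a unique agent $a$, and the inheritance events that carried $P$ into the cycle chain the \parent{}-pointers through the cycle agents: each cycle agent wants to inherit from the highest-priority requester of its own tail, which, inside the cycle, is its predecessor. A child that inherits also copies its parent's $S$, so searched sets accumulate strictly downstream along such a chain; and since an agent entering \requesting puts both its tail and its head into its own $S$ (Line~\ref{algo:causal:update-cs}), once the chain has wrapped around the cycle the cycle agent $a_{i_{j}}$ that inherited last has a non-trivial parent whose $S$ already contains $\tail{i_{j+1}}=\head{i_{j}}$. But then $\parent{i_{j}}\neq a_{i_{j}}$ and $\head{i_{j}}\in S_{\parent{i_{j}}}$, so the activation completing this last link already takes Line~\ref{algo:causal:backcont-s} and reverts $a_{i_{j}}$ to \contracted, breaking the cycle --- contradicting persistence.

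The step I expect to be the main obstacle is making this last part airtight in the presence of agents \emph{outside} the cycle. A frozen \requesting agent that is not on the cycle but carries a priority exceeding every cycle agent's original priority can request the tail of a cycle agent and thereby hijack its \textsc{PriorityInheritance}, pointing its \parent{} outside the cycle, so the parent chain I rely on need not close up; one must then argue that $P$ still reaches one of the cycle agents through its cycle-predecessor, dragging an accumulated $S$ along with it. I would handle this by working not with the bare cycle but with the whole frozen structure feeding into it --- the cycle together with all permanently \requesting agents whose request chains reach it --- showing this structure is itself frozen, so its globally highest priority behaves like that of a root and still propagates around the cycle until Line~\ref{algo:causal:backcont-s} fires. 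Proving rigorously that this larger structure cannot make progress, and separately that a freed cycle agent does not instantly rebuild the same cycle (the node it was requesting was deleted from its candidate set $C$ at Line~\ref{algo:causal:update-cs} and is not restored by reverting), are the parts that need the most care.
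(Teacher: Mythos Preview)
Your proposal follows essentially the same route as the paper: both argue that priority inheritance eventually equalizes \ptmp{} along the cycle, that the searched sets $S$ are propagated together with each inheritance (Line~\ref{algo:procs:end-inheriting} in Algo.~\ref{algo:procs}), and that consequently some cycle agent must eventually satisfy the test at Line~\ref{algo:causal:backcont-s} and revert to \contracted. The paper's own proof is only a few sentences long and leaves exactly the gaps you flag --- in particular it says nothing about agents outside the cycle interfering --- so your version is a more careful elaboration of the same idea rather than a different argument. Your closing observation that a cycle agent reverted to \contracted cannot immediately re-request the same node (because of the update to $C_i$ at Line~\ref{algo:causal:update-cs}) is also the point the paper makes in the paragraph right after its proof.

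One small slip to repair: you assert that a cycle agent which is a root must have $\ptmp{}=\pori{}$, and hence that at most one cycle agent can be a root. But \textsc{ReleaseChildren} sets $\parent{j}\leftarrow a_j$ for the released children without touching $\ptmp{j}$, so an agent can be a root while still carrying an inherited priority. Your deduction that the \parent{} chain threads through the cycle therefore needs a different justification --- for instance, argue directly from the \emph{last} inheritance event that installed $P$ on each cycle agent, tracking which agent was taken as parent at that moment, rather than via root status. This does not undermine the overall strategy, but the step as written is not correct.
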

\begin{proof}
 Assume that there is a deadlock.
 When an agent $a_i$ changes from \contracted to \requesting, it updates $S_i$ such that $S_i$ includes \head{i} and \tail{i} [Line~\ref{algo:causal:update-cs} in Algo.~\ref{algo:causal}].
 After finite activations, all agents involved in the deadlock must have the same priority \ptmp{} due to priority inheritance.
 Every priority inheritance is accompanied by the propagation of $S_i$ [Line~\ref{algo:procs:end-inheriting} in Algo.~\ref{algo:procs}].
 When an agent in \requesting detects its head  in $S_{\parent{i}}$, it changes back to \contracted [Line~\ref{algo:causal:backcont-s}--\ref{algo:causal:backcont-e} in Algo.~\ref{algo:causal}].
 These imply the statement.
\end{proof}

Once an agent $a_i$ in a deadlock comebacks to \contracted, $a_i$ cannot request the same place as before due to the update of $C_i$ [Line~\ref{algo:causal:update-cs} in Algo.~\ref{algo:causal}], preventing a reoccurrence of the deadlock situation; i.e., a livelock situation if repeated indefinitely.

\begin{theorem}[reachability]
 \label{theorem:reachability}
 \algoname has the reachability in biconnected graphs if $|A| < |V|$.
\end{theorem}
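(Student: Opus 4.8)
The plan is to mirror the reachability proof of PIBT, lifting it from the synchronous, timestep‑based setting to the asynchronous, activation‑based one: fairness of activations plays the role of "one timestep", and Theorem~\ref{theorem:deadlock-recovery} rules out the permanent gridlock of lower‑priority agents. Two ingredients carry the argument: (i) the globally highest‑priority agent always makes one step of progress toward its goal within finitely many activations; and (ii) as long as some agent has never been \contracted at its goal, the globally highest‑priority agent is such an agent. For (ii) I would invoke the prioritization scheme "similar to PIBT" already assumed for \algoname: $\pori{}$ is driven to the bottom of the order whenever an agent is \contracted at its goal and otherwise ages upward, so an agent that has never reached its goal eventually dominates every agent that has, and this domination is preserved as agents enter the reached set; hence the real content is (i). Granting (i) and (ii), reachability follows by well‑foundedness on the set $U$ of agents that have never yet been \contracted at their goal: while $U\neq\emptyset$, let $a_h$ be the globally highest‑priority agent, which by (ii) lies in $U$; by (i), $a_h$ advances along a shortest path to $g_h$ and so reaches $g_h$ after finitely many activations, leaving $U$; since $|U|$ strictly decreases and is finite, eventually $U=\emptyset$.

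To prove (i), I would first observe that the highest‑priority agent $a_h$ is permanently a root: \textsc{PriorityInheritance} only ever copies a strictly larger $\ptmp{}$, so no agent ever becomes the parent of $a_h$, and $\ptmp{h}=\pori{h}$ throughout. When $a_h$ is \contracted with $\tail{h}\neq g_h$, it selects the neighbor $u$ of $\tail{h}$ nearest to $g_h$ and becomes \requesting with $\head{h}=u$; since $u$ lies on a shortest path to $g_h$, it suffices to show that from any such configuration $a_h$ becomes \contracted at $u$ after finitely many activations. Consider the virtual depth‑first search tree rooted at $a_h$ grown by priority inheritance and backtracking. Every agent pulled into this tree carries $\ptmp{}=\pori{h}$, the global maximum, so it wins every winner determination (Lines~\ref{algo:causal:pass-ext}--\ref{algo:causal:winner-determination-end}) against any agent outside the tree, and lower‑priority trees can only wait or be absorbed; thus no lower‑priority agent can permanently occupy a node the tree wants. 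The searched set $S$ is monotonically propagated through the tree and never revisits a node (Line~\ref{algo:causal:update-cs} and Line~\ref{algo:procs:end-inheriting} in Algo.~\ref{algo:procs}), so within the lifetime of this tree the search performs a genuine DFS over a subset of $V$ and, by fairness of activations, halts after finitely many activations in one of three ways: an unoccupied node adjacent to the tree is found; $a_h$'s own candidate set empties (the reset at Line~\ref{algo:causal:detect-searchend}); or backtracking returns to $a_h$ with $u$ removed.

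The crux is to exclude the latter two outcomes, i.e.\ to show the DFS succeeds, and this is where biconnectedness and $|A|<|V|$ enter. Since $|A|<|V|$ there is always at least one unoccupied node; since $G$ is biconnected, $G-\tail{h}$ is connected, so there is a path from $u$ to an unoccupied node avoiding $\tail{h}$. A DFS from $u$ through the occupied region of $G-\tail{h}$ therefore reaches an unoccupied node: whenever the search would otherwise exhaust a branch, connectivity of $G-\tail{h}$ furnishes an unsearched neighbor on some frontier agent, so the search cannot stop with $a_h$'s candidates empty while an empty node is still reachable. Once an unoccupied node adjacent to the tree is found, the chain of \requesting agents from $a_h$ down to it becomes \extended and then \contracted in turn (again by fairness), shifting every agent on the chain forward by one; in particular $a_h$ becomes \contracted at $u$. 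This establishes (i), and with (ii) the theorem follows.

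I expect the main obstacle to be making this asynchronous DFS argument rigorous. In synchronous PIBT the whole tree is built "within one timestep" against a frozen configuration; here the configuration drifts under interleaved activations of unrelated and lower‑priority agents, so the "occupied region" the search traverses is a moving target. One must show (a) termination of the search — handled by monotonicity of $S$ — and, harder, (b) termination in the success case, which requires combining the permanent existence of an empty node, the connectivity of $G-\tail{h}$, the fact that $a_h$'s tree out‑prioritizes everything it touches, deadlock‑recovery (Theorem~\ref{theorem:deadlock-recovery}) to stop lower‑priority agents from locking up the region to be traversed, and fairness to force each pending activation to fire. Formulating a single invariant preserved by every kind of activation — of $a_h$'s tree, of a lower‑priority tree, or of an idle agent — that implies "$a_h$ eventually reaches $u$" is the technical heart of the proof.
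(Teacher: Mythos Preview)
Your approach is essentially the paper's: the highest-priority agent $a_h$ is always a root, the DFS tree rooted at $a_h$ out-prioritizes everything it touches, biconnectedness of $G-\tail{h}$ together with $|A|<|V|$ guarantees the search finds an unoccupied node rather than exhausting $a_h$'s candidates, and the dynamic priority scheme then cycles this argument through all agents not yet at their goals. The paper's decomposition is somewhat different and directly dissolves the ``moving target'' obstacle you flag at the end. It first proves that the parent--child relation is always a forest, then isolates a \emph{movable-agent lemma}: if the dominant tree $T_d$ contains any agent $a_m$ that is \requesting with $\head{m}$ unoccupied, then $a_d$ eventually becomes \extended. The proof of that lemma is the clean asynchronous invariant you were looking for---once such an $a_m$ exists, the only activation that changes any state inside $T_d$ is $a_m$'s own (every other agent on the root-to-$a_m$ path is \requesting and blocked, every other agent in $T_d$ is \contracted with nothing to do, and outsiders lose every contest), so $a_m$ moves and its parent becomes the new movable agent; induction up the path finishes. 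With that lemma in hand, the biconnectedness-plus-$|A|<|V|$ argument need only show that a movable agent eventually appears in $T_d$ and that $a_d$ never gets backtracked to \contracted, both of which are handled by the spanning contradiction you sketched. So your plan is sound; the paper's extra lemma just localizes the asynchrony to a single inductive step rather than requiring a global invariant over all activation interleavings.
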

\begin{sketch}
  Let $a_i$ be the agent with highest priority \pori{}.
  If $a_i$ is \contracted, this agent does not inherit any priority.
  Thus, $a_i$ can be \requesting so that \head{i} is any neighbor node of \tail{i}.
  If $a_i$ is \requesting, $a_i$ eventually moves to \head{i} due to the construction of the depth-first search tree in a biconnected graph.
  These imply that $a_i$ can move to an arbitrary neighbor node in finite activations.
  By this, $a_i$ moves along the shortest path to its goal.
  Due to the prioritization scheme, an agent that has not reached its goal eventually inherits the highest priority \pori{}, and then starts moving along its shortest path to its goal. This ensures reachability.
  The detailed proof is found in the long version~\cite{okumura2020causalpibt:techrep}.
\end{sketch}

\begin{algorithm}[t]
 \caption{Procedures of \algoname}
 \label{algo:procs}
 {\small
 \begin{algorithmic}[1]
  \Procedure{PriorityInheritance}{}
  \State {\scriptsize $A^{\prime\prime} \leftarrow \{ a_j \;|\; a_j \in A,
  \mode{j} = \requesting, \tail{i} = \head{j} \}$ }
  \label{algo:procs:check-start}
  \Ifsingle{$A^{\prime\prime} = \emptyset$}{\Return}
  \State $a_k \leftarrow \argmax_{a_j \in A^{\prime\prime}} \ptmp{j}$
  \Ifsingle{$\ptmp{k} \leq \ptmp{i}$}{\Return}
  \label{algo:procs:check-end}
  \State \Call{ReleaseChildren}{}
  \label{algo:procs:update-tree}
  \State {\footnotesize $\children{\parent{i}} \leftarrow \children{\parent{i}} \setminus \{ a_i \}$}
  \Comment \parent{i}
  \State $\parent{i} \leftarrow a_k$, $\children{k} \leftarrow \children{k} \cup \{ a_i \}$
  \Comment $a_k$
  \State $\ptmp{i} \leftarrow \ptmp{k}$
  \State $S_i \leftarrow S_k \cup \{ \head{i} \}$, $C_i \leftarrow \Neigh{\tail{i}} \cup \{ \tail{i} \} \setminus S_i $
  \label{algo:procs:end-inheriting}
  \EndProcedure
  \vspace{0.05cm}
  \Procedure{ReleaseChildren}{}
  \Forsingle{$a_j \in \children{i}$}{$\parent{j} \leftarrow a_j$ \Comment $a_j$}
  \State $\children{i} \leftarrow \emptyset$
  \EndProcedure
  \vspace{0.05cm}
  \Procedure{Reset}{}
  \State $S_i \leftarrow \emptyset$, $C_i \leftarrow \Neigh{\tail{i}} \cup \{ \tail{i} \}$
  \State $\mathit{ptmp}_i \leftarrow \mathit{pori}_i$
  \label{algo:procs:updateptmp}
  \EndProcedure
\end{algorithmic}
}
\end{algorithm}

\begin{proposition}[time complexity]
  Assume that the maximum time required for one operation to update $S_i$ or $C_i$ (i.e., union, set minus) is $\alpha$ and for an operation of Line~\ref{algo:causal:node-select} in Algo.~\ref{algo:causal} is $\beta$.
  Let $\Delta(G)$ denote the maximum degree of $G$.
  For each activation, time complexity of \algoname in \contracted is $O(\Delta(G)+\alpha+\beta)$, in \requesting is $O(\Delta(G)+\alpha)$, and in \extended is $O(1)$.
\end{proposition}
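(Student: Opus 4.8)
The plan is to bound the cost of a single activation by reading Algorithms~\ref{algo:causal} and~\ref{algo:procs} line by line, giving each instruction a cost in $\{O(1),\,O(\Delta(G)),\,O(\alpha),\,O(\beta)\}$, and summing over the constantly many instructions an activation can execute in each mode. Every instruction other than four families is plainly $O(1)$: mode changes; assignments to \tail{}, \head{}, \parent{}; insertion or removal of a single agent from a \children{} set; priority comparisons; the goal-triggered update of \pori{i} (Line~\ref{algo:causal:updatepori}), which we take to be $O(1)$ as in PIBT's prioritization scheme; emptiness tests such as $C_i=\emptyset$; and membership tests such as $\head{i}\in S_{\parent{i}}$, under the convention that the node sets are stored in a vertex-indexed representation. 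The four families whose cost must be tracked are: (a) building the agent sets $A^{\prime\prime}$ (in \textsc{PriorityInheritance}) and $A^\prime$ (in winner determination) together with the accompanying $\argmax$, the loop over the losers $A^\prime\setminus\{a^\star\}$, and the $\lnot\occupied{\head{i}}$ test; (b) \textsc{ReleaseChildren}; (c) the union/set-minus operations on $S_i$ and $C_i$; and (d) the nearest-node selection on Line~\ref{algo:causal:node-select}.

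I would bound family (a) by the observation that every agent relevant to $A^{\prime\prime}$, $A^\prime$, or the $\occupied{\cdot}$ test has its \tail{} adjacent to one fixed node (namely $\tail{i}$ or $\head{i}$): a \requesting agent counted in $A^{\prime\prime}$ has \head{} $=\tail{i}$, one in $A^\prime$ has \head{} $=\head{i}$, and an \extended agent witnessing $\lnot\occupied{\head{i}}$ has \head{} $=\head{i}$, so in each case its \tail{} is a neighbor of that node. By distinctness of agent locations at most $\Delta(G)$ agents qualify, so each of these steps costs $O(\Delta(G))$. For family (b) I would show $|\children{i}|\le\deg(\tail{i})\le\Delta(G)$: $a_i$ executes \textsc{ReleaseChildren} just before it ever becomes \extended, hence any current child was attached while \tail{i} already had its present value, so that child's \tail{} equals some node of $\Neigh{\tail{i}}$; a still-attached child has not itself moved (it removes itself from its parent's \children{} set before becoming \extended), so its \tail{} is unchanged, and distinctness gives at most one child per neighbor of $\tail{i}$. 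For family (c), each $S_i\leftarrow\cdots$ or $C_i\leftarrow\cdots$ is a single union or set-minus and costs $O(\alpha)$ by assumption, provided $\Neigh{\tail{i}}\cup\{\tail{i}\}$ is kept as a precomputed static set so that re-initializing $C_i$ in \textsc{Reset} and in Line~\ref{algo:procs:end-inheriting} is an $O(1)$ reference plus at most one set-minus; only $O(1)$ such operations occur per activation. For family (d), Line~\ref{algo:causal:node-select} runs at most once per activation, costing $O(\beta)$.

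Assembling the per-mode totals then gives the claim. In \extended the activation only reassigns \tail{}, \head{}, \mode{}, updates \pori{i}, and calls \textsc{Reset}, each $O(1)$ under the precomputed-neighborhood convention and with no set of families (a)--(c) computed, so the bound is $O(1)$. In \requesting the activation calls \textsc{PriorityInheritance} (cost $O(\Delta(G)+\alpha)$ by (a)--(c)), does $O(1)$ membership and pointer work, possibly runs the winner-determination block (cost $O(\Delta(G))$ by (a)) and \textsc{ReleaseChildren} (cost $O(\Delta(G))$ by (b)), and finishes with $O(1)$ bookkeeping, for a total of $O(\Delta(G)+\alpha)$. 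In \contracted the activation may, in the worst case, run \textsc{ReleaseChildren} and \textsc{Reset} ($O(\Delta(G))$), \textsc{PriorityInheritance} ($O(\Delta(G)+\alpha)$), and then \emph{either} the backtracking block's two set operations on the parent ($O(\alpha)$) \emph{or} the nearest-node selection ($O(\beta)$) followed by the two set updates of Line~\ref{algo:causal:update-cs} ($O(\alpha)$); summing yields $O(\Delta(G)+\alpha+\beta)$.

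The part I expect to be delicate is family (b): it is the only bound that is not immediate from the pseudocode, and making it rigorous requires reasoning about how \parent{}/\children{} pointers evolve through the backtracking and deadlock-resolution branches — several of which set $\head{}\leftarrow\bot$ without an immediate \textsc{ReleaseChildren} — in order to confirm that a child's \tail{} really does remain equal to its parent's \head{} at the parent's current position for the whole duration of the relationship. One should also make the underlying data-structure conventions explicit: a vertex-to-agent location map for family (a), vertex-indexed storage of $S_i$ for $O(1)$ membership, and precomputed closed neighborhoods for $O(1)$ re-initialization of $C_i$; the $O(1)$ bound in \extended and the clean separation between the $\Delta(G)$ and $\alpha$ terms in the other two modes both rely on them.
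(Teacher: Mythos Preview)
Your proposal is correct and follows essentially the same approach as the paper's proof sketch: both rest on the bounds $|\children{i}|,|A'|,|A''|\le\Delta(G)$, on $\textsc{PriorityInheritance}$ costing $O(\Delta(G)+\alpha)$, $\textsc{ReleaseChildren}$ costing $O(\Delta(G))$, and $\textsc{Reset}$ costing $O(1)$, and then assemble the per-mode totals. Your version is considerably more detailed than the paper's sketch --- in particular, your explicit data-structure conventions (vertex-to-agent map, precomputed closed neighborhoods, vertex-indexed $S_i$) and your careful justification of $|\children{i}|\le\Delta(G)$ fill in exactly the points the paper leaves implicit.
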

\begin{sketch}
  $|\children{i}|$, $|A^\prime|$ used in \requesting, and $|A^{\prime\prime}|$ used in \textsc{PriorityInheritance} are smaller than equal to $\Delta(G)$.
  Then, the complexity of \textsc{PriorityInheritance} is $O(\Delta(G)+\alpha)$.
  \textsc{ReleaseChildren} is $O(\Delta(G))$.
  \textsc{Reset} is $O(1)$.
  We can derive the statements using the above.
\end{sketch}

{
  \newcommand{\colwidth}{1\hsize}
\begin{figure}[t]
  \centering
  \begin{tabular}{@{}c@{}}
    \begin{minipage}{\colwidth}
      \centering
      \begin{tikzpicture}
        \node[vertex-s](v0-0) at (4.4, 2.3) {};
      \node[vertex-s,right=0cm of v0-0](v1-0) {};
      \node[vertex-s,right=0cm of v1-0](v2-0) {};
      \node[vertex-s,right=0cm of v2-0](v3-0) {};
      \node[vertex-s,right=0cm of v3-0](v4-0) {};
      \node[vertex-s,right=0cm of v4-0](v5-0) {};
      \node[vertex-s,above=0cm of v0-0](v0-1) {};
      \node[vertex-s,above=0cm of v0-1](v0-2) {};
      \node[vertex-s,above=0cm of v0-2](v0-3) {};
      \node[vertex-s,above=0cm of v0-3](v0-4) {};
      \node[vertex-s,above=0cm of v0-4](v0-5) {};
      \node[vertex-s,above=0cm of v1-0](v1-1) {};
      \node[vertex-s,above=0cm of v1-1](v1-2) {};
      \node[vertex-s,above=0cm of v1-2](v1-3) {};
      \node[vertex-s,above=0cm of v1-3](v1-4) {};
      \node[vertex-s,above=0cm of v1-4](v1-5) {};
      \node[vertex-s,above=0cm of v2-0](v2-1) {};
      \node[vertex-s,above=0cm of v2-1](v2-2) {};
      \node[vertex-s,above=0cm of v2-2](v2-3) {};
      \node[vertex-s,above=0cm of v2-3](v2-4) {};
      \node[vertex-s,above=0cm of v2-4](v2-5) {};
      \node[vertex-s,above=0cm of v3-0](v3-1) {};
      \node[vertex-s,above=0cm of v3-1](v3-2) {};
      \node[vertex-s,above=0cm of v3-2](v3-3) {};
      \node[vertex-s,above=0cm of v3-3](v3-4) {};
      \node[vertex-s,above=0cm of v3-4](v3-5) {};
      \node[vertex-s,above=0cm of v4-0](v4-1) {};
      \node[vertex-s,above=0cm of v4-1](v4-2) {};
      \node[vertex-s,above=0cm of v4-2](v4-3) {};
      \node[vertex-s,above=0cm of v4-3](v4-4) {};
      \node[vertex-s,above=0cm of v4-4](v4-5) {};
      \node[vertex-s,above=0cm of v5-0](v5-1) {};
      \node[vertex-s,above=0cm of v5-1](v5-2) {};
      \node[vertex-s,above=0cm of v5-2](v5-3) {};
      \node[vertex-s,above=0cm of v5-3](v5-4) {};
      \node[vertex-s,above=0cm of v5-4](v5-5) {};
      \node[agent-s](a0) at (v0-3) {};
      \node[dest-s ](d0) at (v5-3) {};
      \node[agent-s](a1) at (v0-1) {};
      \node[dest-s ](d1) at (v5-1) {};
      \node[agent-s](a2) at (v2-0) {};
      \node[dest-s ](d2) at (v2-5) {};
      \node[agent-s](a3) at (v4-0) {};
      \node[dest-s ](d3) at (v4-5) {};
      \node[agent-s](a4) at (v5-2) {};
      \node[dest-s ](d4) at (v0-2) {};
      \node[agent-s](a5) at (v5-4) {};
      \node[dest-s ](d5) at (v0-4) {};
      \node[agent-s](a6) at (v3-5) {};
      \node[dest-s ](d6) at (v3-0) {};
      \node[agent-s](a7) at (v1-5) {};
      \node[dest-s ](d7) at (v1-0) {};
      \foreach \u / \v in {a0/d0,a1/d1,a2/d2,a3/d3,a4/d4,a5/d5,a6/d6,a7/d7}
      \draw[destarrow-s] (\u) -- (\v);
      \node[anchor=south west,inner sep=0] at (0,0)
      {\includegraphics[width=1\hsize]{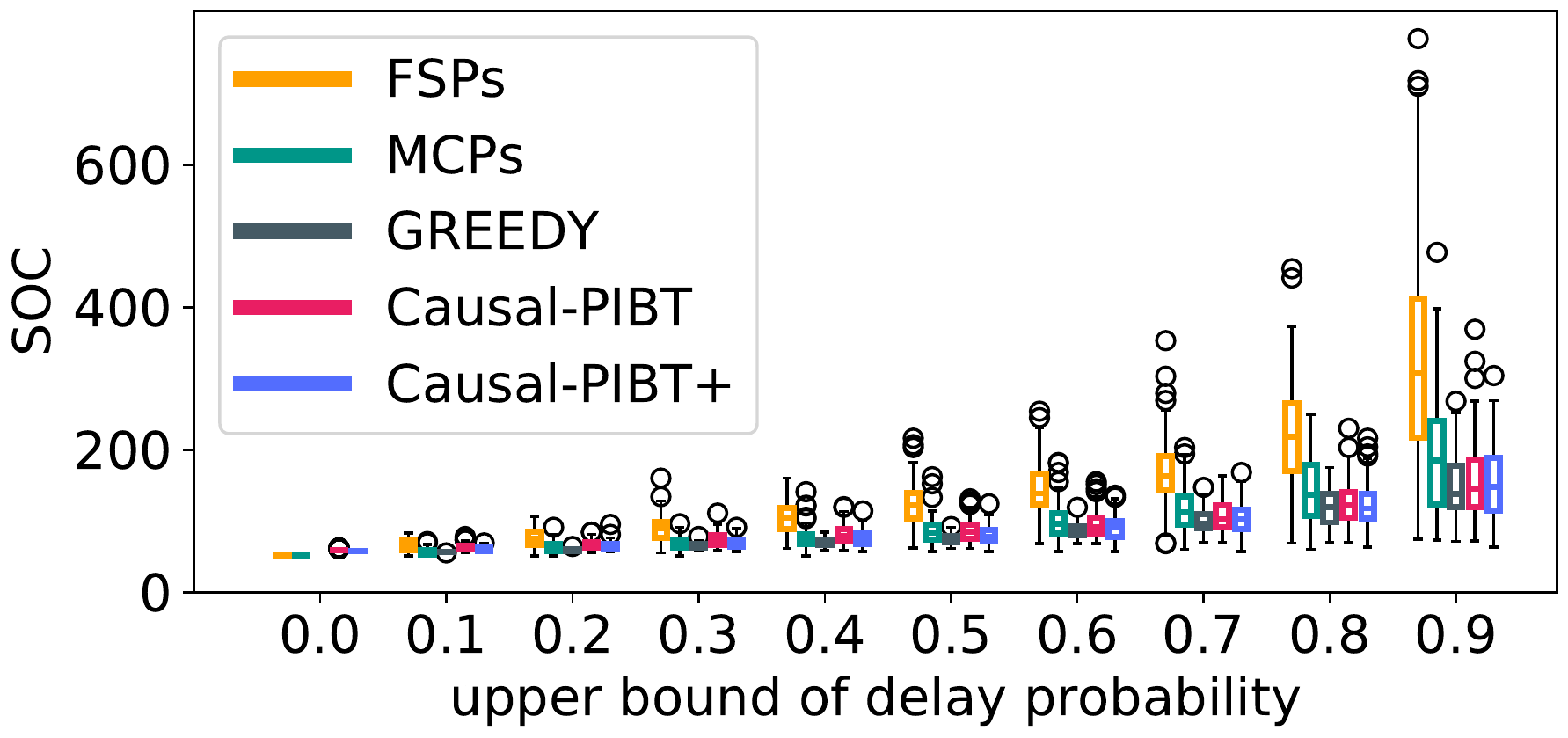}};
      \end{tikzpicture}
    \end{minipage}
    \\
    \begin{minipage}{\colwidth}
      \centering\medskip
      \begin{tikzpicture}
      \node[vertex-s](v0-0) at (4.4, 2.5) {};
      \node[vertex-s,right=0cm of v0-0](v1-0) {};
      \node[vertex-s,right=0cm of v1-0](v2-0) {};
      \node[vertex-s,right=0cm of v2-0](v3-0) {};
      \node[vertex-s,right=0cm of v3-0](v4-0) {};
      \node[vertex-s,right=0cm of v4-0](v5-0) {};
      \node[vertex-s,above=0cm of v0-0](v0-1) {};
      \node[vertex-s,above=0cm of v0-1](v0-2) {};
      \node[vertex-s,above=0cm of v0-2](v0-3) {};
      \node[vertex-s,above=0cm of v0-3](v0-4) {};
      \node[vertex-s,above=0cm of v1-0](v1-1) {};
      \node[vertex-s,above=0cm of v1-1](v1-2) {};
      \node[vertex-s,above=0cm of v1-2](v1-3) {};
      \node[vertex-s,above=0cm of v1-3](v1-4) {};
      \node[vertex-s-obj,above=0cm of v2-0](v2-1) {};
      \node[vertex-s,above=0cm of v2-1](v2-2) {};
      \node[vertex-s-obj,above=0cm of v2-2](v2-3) {};
      \node[vertex-s,above=0cm of v2-3](v2-4) {};
      \node[vertex-s-obj,above=0cm of v3-0](v3-1) {};
      \node[vertex-s,above=0cm of v3-1](v3-2) {};
      \node[vertex-s-obj,above=0cm of v3-2](v3-3) {};
      \node[vertex-s,above=0cm of v3-3](v3-4) {};
      \node[vertex-s,above=0cm of v4-0](v4-1) {};
      \node[vertex-s,above=0cm of v4-1](v4-2) {};
      \node[vertex-s,above=0cm of v4-2](v4-3) {};
      \node[vertex-s,above=0cm of v4-3](v4-4) {};
      \node[vertex-s,above=0cm of v5-0](v5-1) {};
      \node[vertex-s,above=0cm of v5-1](v5-2) {};
      \node[vertex-s,above=0cm of v5-2](v5-3) {};
      \node[vertex-s,above=0cm of v5-3](v5-4) {};
      \node[agent-s](a0) at (v0-0) {};
      \node[agent-s](a1) at (v0-2) {};
      \node[agent-s](a2) at (v0-4) {};
      \node[agent-s](a3) at (v5-0) {};
      \node[agent-s](a4) at (v5-2) {};
      \node[agent-s](a5) at (v5-4) {};
      \foreach \u / \v in {a0/v5-0,a1/v5-2,a2/v5-4,a3/v0-0,a4/v0-2,a5/v0-4}
      \draw[destarrow-s] (\u) -- (\v);
      \node[anchor=south west,inner sep=0] at (0,0)
      {\includegraphics[width=1\hsize]{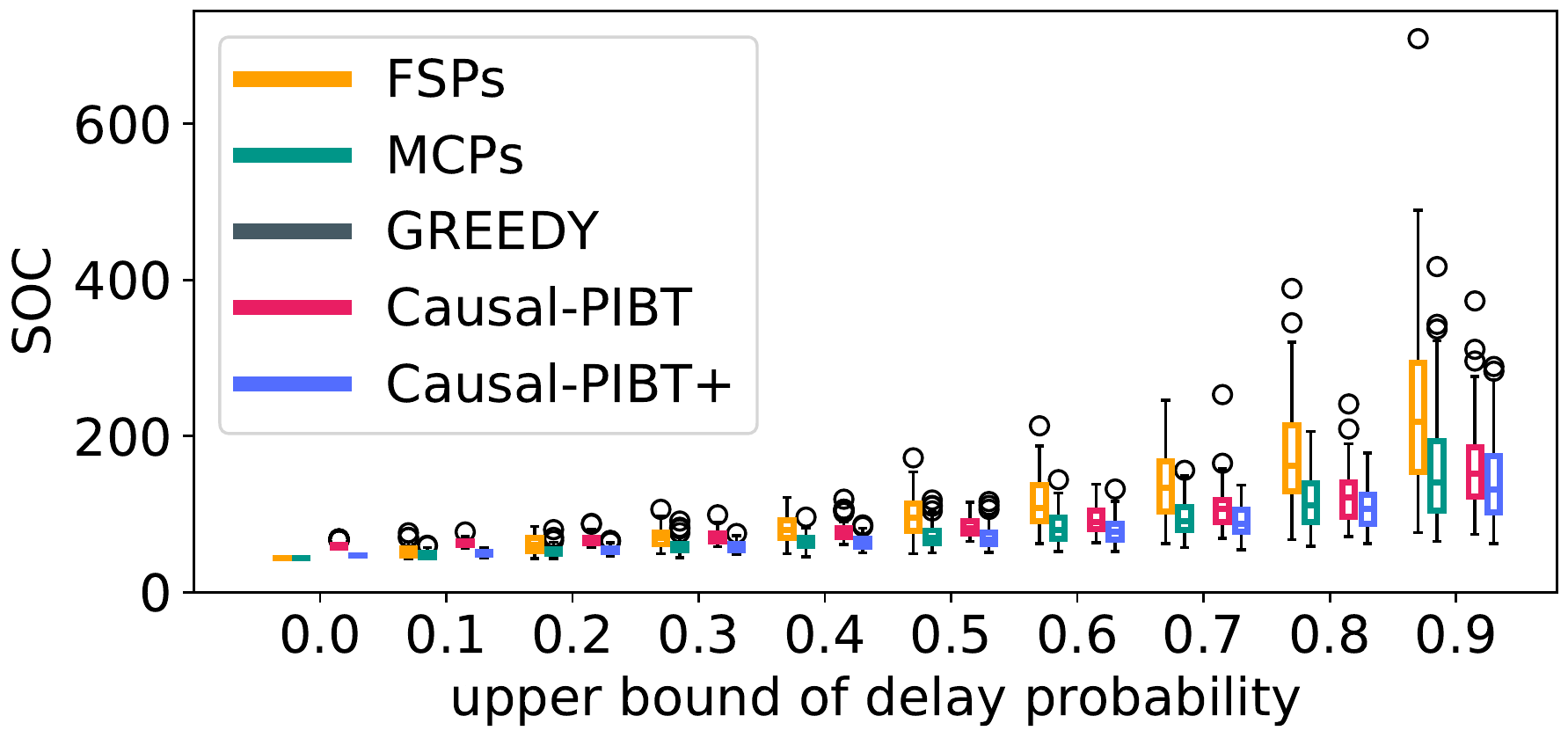}};
    \end{tikzpicture}
  \end{minipage}
  \end{tabular}
  \caption{The results in small benchmarks, shown as boxplots.
    The number of successful instances for \greedy are 0, 5, 8, 11, 17, 26, 30, 36, 40, 51 for each $\bar{p}$ (from $0$ to $0.9$) in the upper plot.
    \greedy failed all cases in the lower plot.
  }
  \label{fig:result:1}
\end{figure}
}

\subsection{MAPF Plans as Hints}
\label{subsec:heuristics}
Although \greedy and \algoname are for online situations, they can optionally use offline MAPF plans as hints while still being online and distributed schemes during execution.
This can contribute to relaxing congestion because time-independent planning is shortsighted, i.e., planning paths anticipating only a single step ahead.
Finding optimal MAPF plans is NP-hard~\cite{yu2013structure,ma2016multi}, however, powerful solvers have been developed so far~\cite{stern2019multi};
time-independent planning can use them.
We describe how \algoname is enhanced by MAPF plans.
Note that \greedy can be extended as well.
The intuition is to make agents follow original plans whenever possible.

Given a MAPF plan \paths, assume that $a_i$ knows \path{i} \apriori.
Before execution, $a_i$ makes a new path $\tilde{\pi}_i$ by removing a node \loc{i}{t} such that $\loc{i}{t} = \loc{i}{t-1}$ from \path{i} while keeping the order of nodes, since the action ``stay'' is meaningless in the time-independent model.
During execution, $a_i$ manages its own discrete time $t_i$ internally (initially $t_i = 0$).

The node selection phase [Line~\ref{algo:causal:node-select} in Algo.~\ref{algo:causal}] is modified as follows.
If $t_i \geq |\tilde{\pi}_i|-1$, i.e., $a_i$ finished $\tilde{\pi}_i$, $a_i$ follows the original policy; chooses a next node greedily from $C_i$.
Otherwise, if $\tail{i} = \loc{i}{t_i}$ and $\loc{i}{t_i + 1} \in C_i$, i.e., when $a_i$ can follow $\tilde{\pi}_i$, $a_i$ selects $\loc{i}{t_i + 1}$.
Or else, $a_i$ selects the nearest node on the rest of $\tilde{\pi}_i$, formally;
$\argmin_{v \in C_i} \left\{ \min_{u \in \tilde{\pi}[t_i + 1\colon]} \text{cost}(v, u) \right\}$ where $\tilde{\pi}[t\colon] = (\tilde{\pi}[t], \tilde{\pi}[t+1],\dots)$.

$t_i$ is updated when $a_i$ in \extended is activated.
If $\head{i} \in \tilde{\pi}_i[t_i+1\colon]$, $t_i$ becomes the corresponding timestep $t^\prime$ such that $\tilde{\pi}_i[t^\prime] = \head{i}$; otherwise, do nothing.

{
  \newcommand{\colwidth}{0.9\hsize}
\begin{figure}[t]
  \centering
  \begin{tabular}{@{}c@{}}
    \begin{minipage}{\colwidth}
      \centering
      \begin{tikzpicture}
        \node[anchor=south west,inner sep=0] at (0,0)
        {\includegraphics[width=1\hsize]{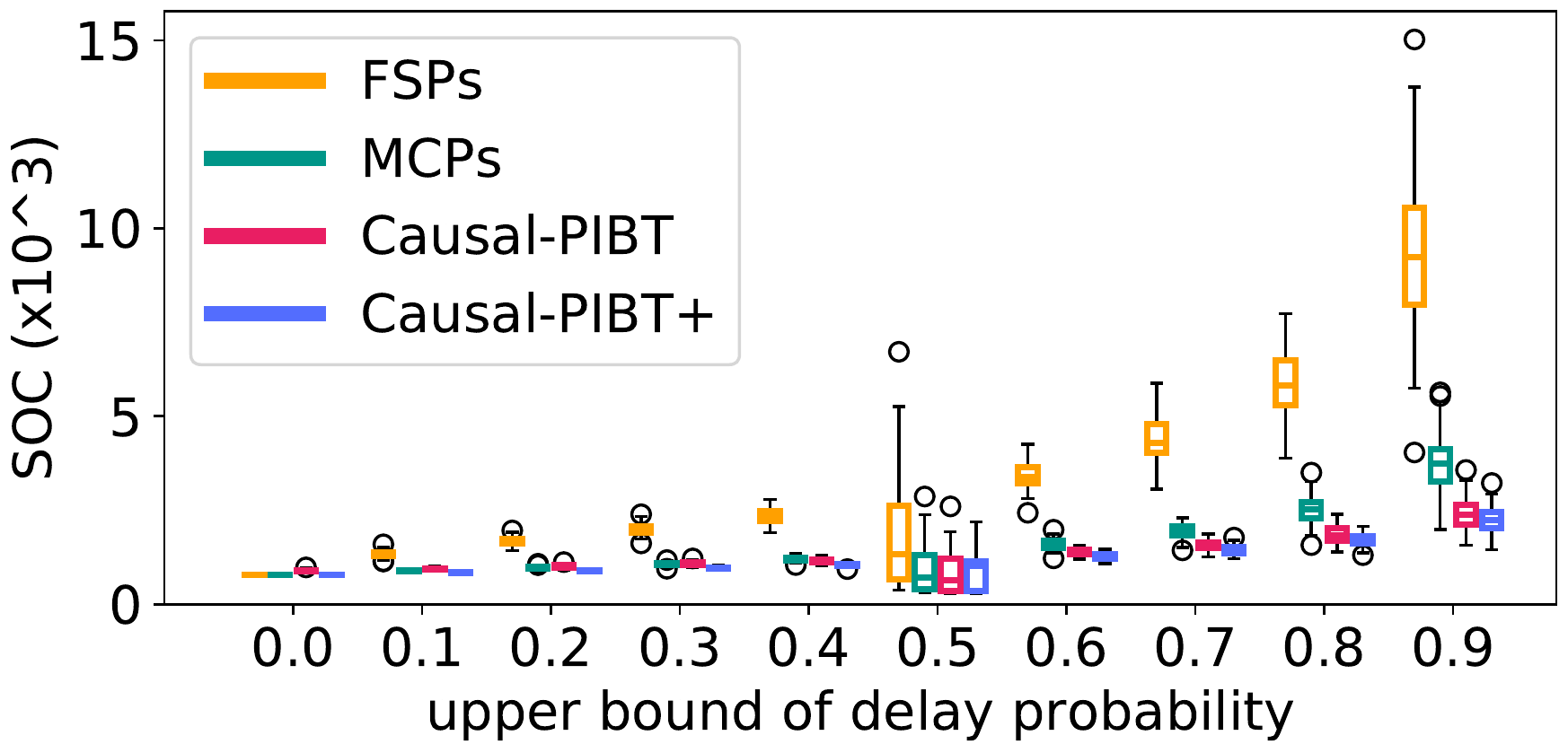}};
        \node[anchor=south west,inner sep=0] at (3.7,2.55)
        {\includegraphics[width=0.13\hsize]{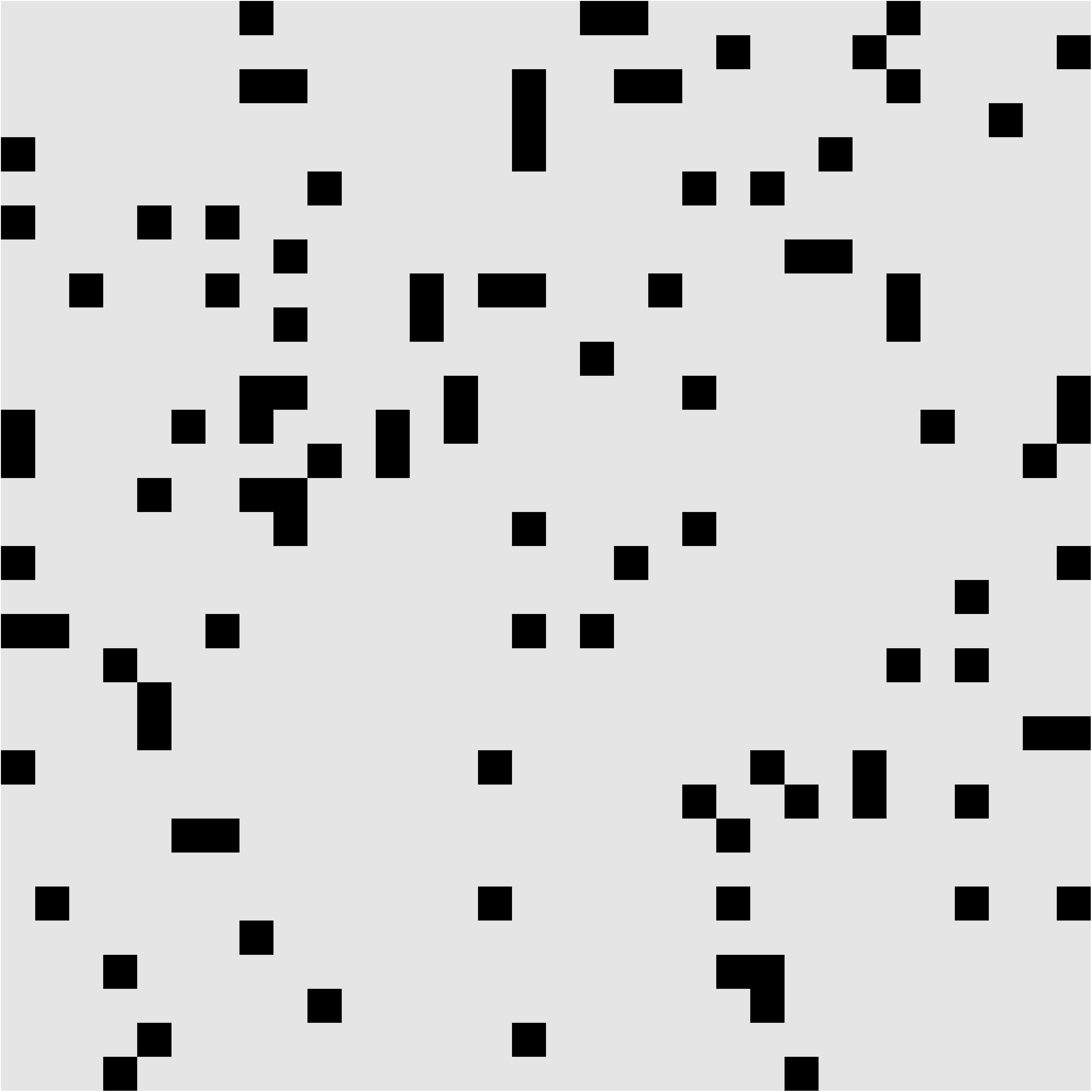}};
        \node[anchor=west,inner sep=0] at (4.8, 3.45) {\scriptsize random-32-32-10};
        \node[anchor=west,inner sep=0] at (4.8, 3.15) {\scriptsize $32\stimes 32$};
        \node[anchor=west,inner sep=0] at (4.8, 2.85) {\scriptsize $35$ agents};
      \end{tikzpicture}
    \end{minipage}
    \\
    \begin{minipage}{\colwidth}
      \centering\medskip
      \begin{tikzpicture}
        \node[anchor=south west,inner sep=0] at (0,0)
        {\includegraphics[width=1\hsize]{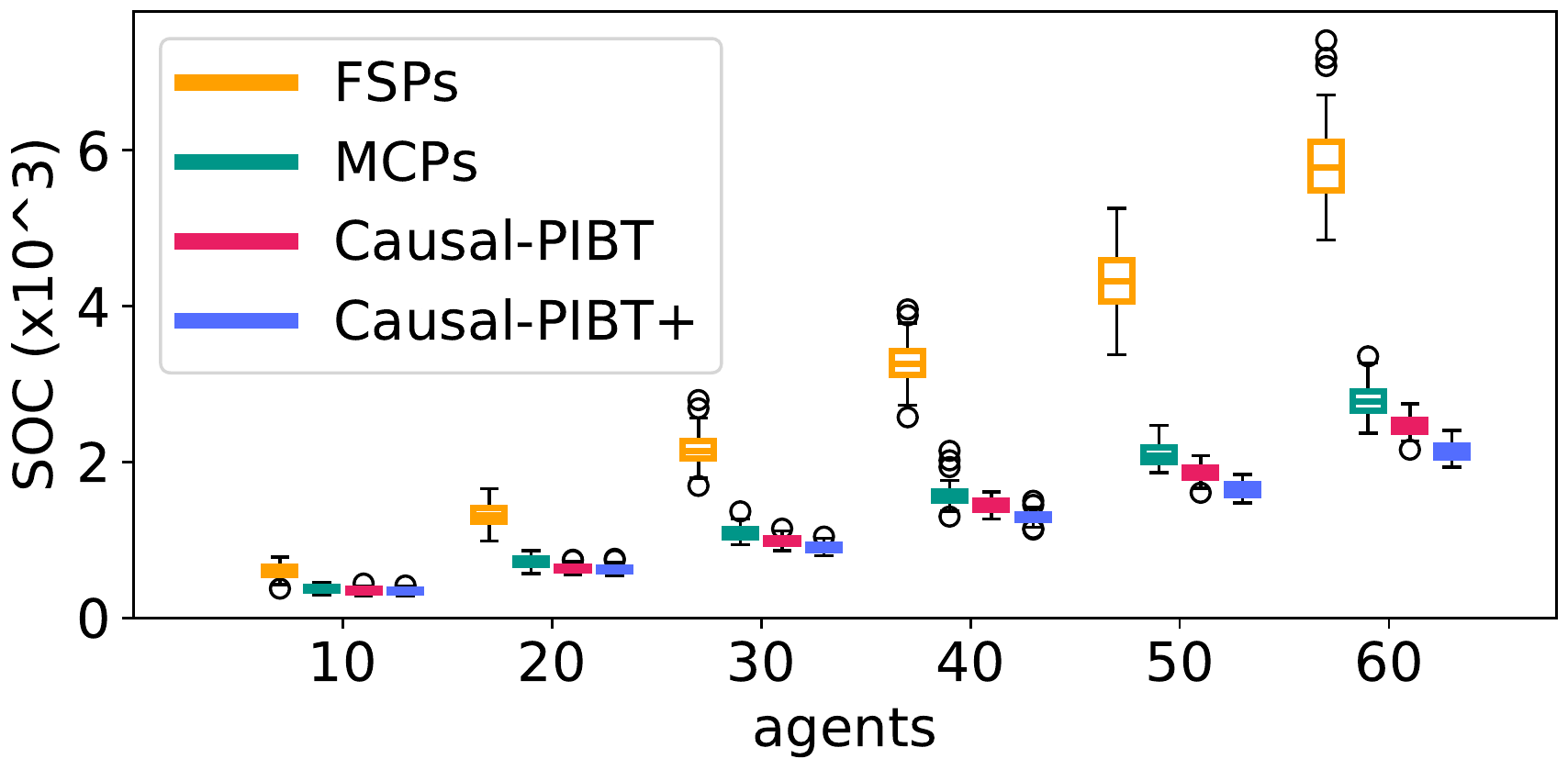}};
        \node[anchor=south west,inner sep=0] at (3.55,2.6)
        {\includegraphics[width=0.13\hsize]{fig/random-32-32-10.pdf}};
        \node[anchor=west,inner sep=0] at (4.65, 3.55) {\scriptsize random-32-32-10};
        \node[anchor=west,inner sep=0] at (4.65, 3.25) {\scriptsize $32\stimes 32$};
        \node[anchor=west,inner sep=0] at (4.65, 2.95) {\scriptsize $\bar{p}=0.5$};
      \end{tikzpicture}
    \end{minipage}
  \end{tabular}
  \caption{Executions in random grids.}
  \label{fig:result:2}
\end{figure}
}
{
  \newcommand{\colwidth}{0.9\hsize}
\begin{figure}[t]
  \centering
  \begin{tabular}{@{}c@{}}
    \begin{minipage}{\colwidth}
      \centering
      \begin{tikzpicture}
        \node[anchor=south west,inner sep=0] at (0,0)
        {\includegraphics[width=1\hsize]{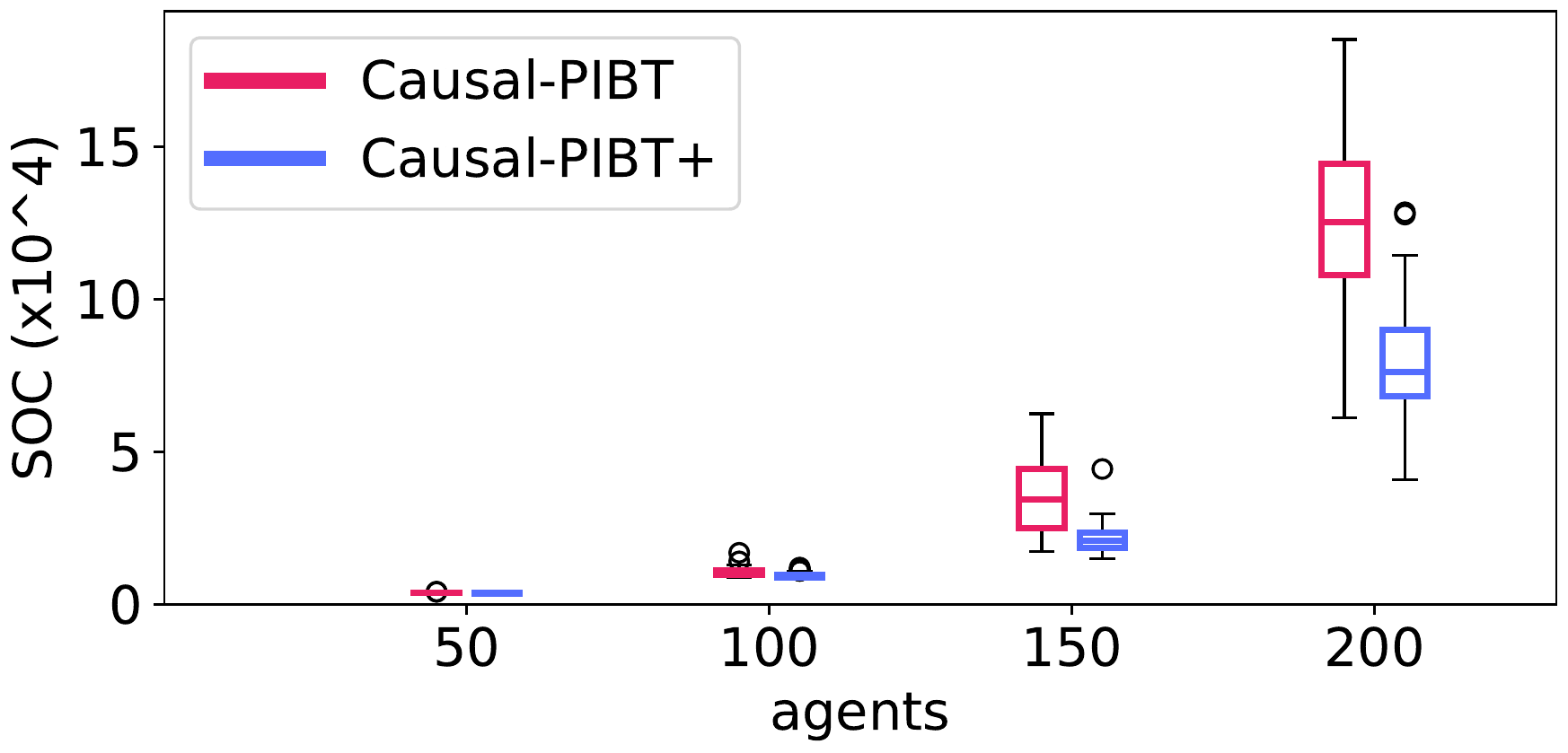}};
        \node[anchor=south west,inner sep=0,label=right:{\scriptsize den312d, $65\stimes 81$}] at (1.0,1.4)
        {\includegraphics[width=0.15\hsize]{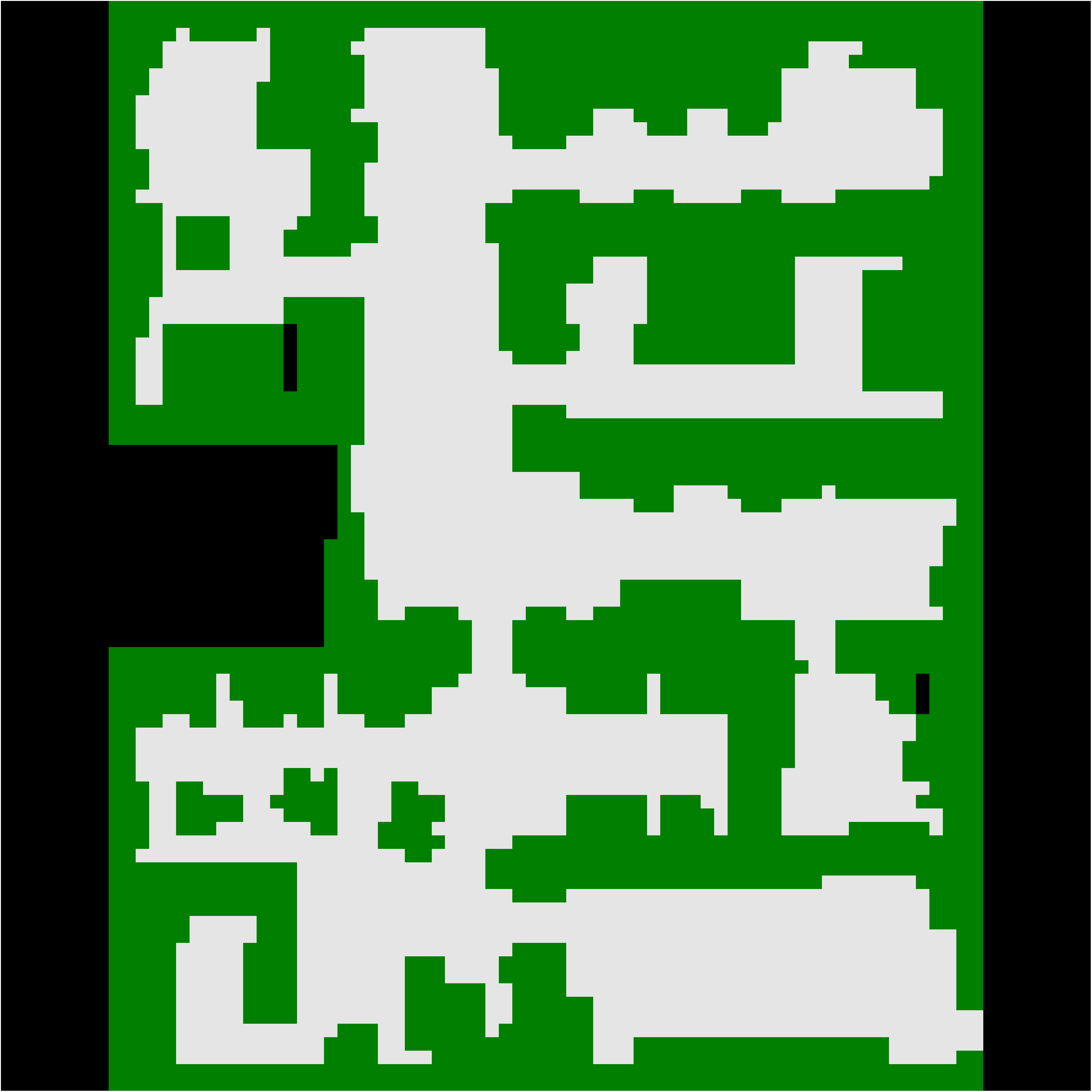}};
        \node[anchor=west](text1) at (2.6, 1.3) {\scriptsize bottleneck};
        \draw[->] (text1.west) -- (1.6, 1.9);
        \node[anchor=south west,inner sep=0] at (2.3,1.6) {\scriptsize $\bar{p}=0.1$};
      \end{tikzpicture}
    \end{minipage}
    \\
    \begin{minipage}{\colwidth}
      \centering\medskip
      \begin{tikzpicture}
        \node[anchor=south west,inner sep=0] at (0,0)
        {\includegraphics[width=1\hsize]{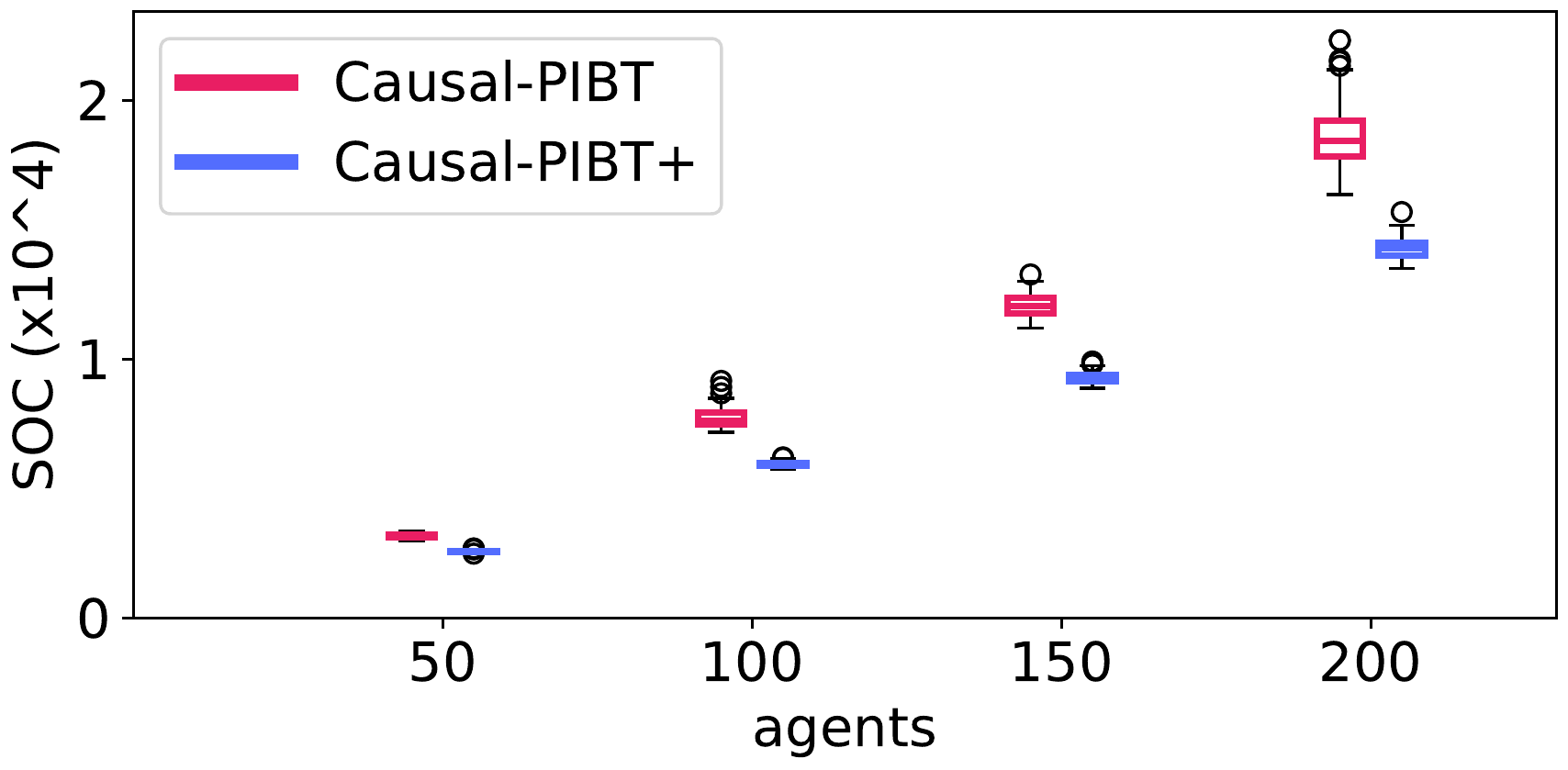}};
        \node[anchor=south west,inner sep=0,label=right:{\scriptsize random-64-64-20, $64\stimes 64$}] at (0.9,1.45)
        {\includegraphics[width=0.15\hsize]{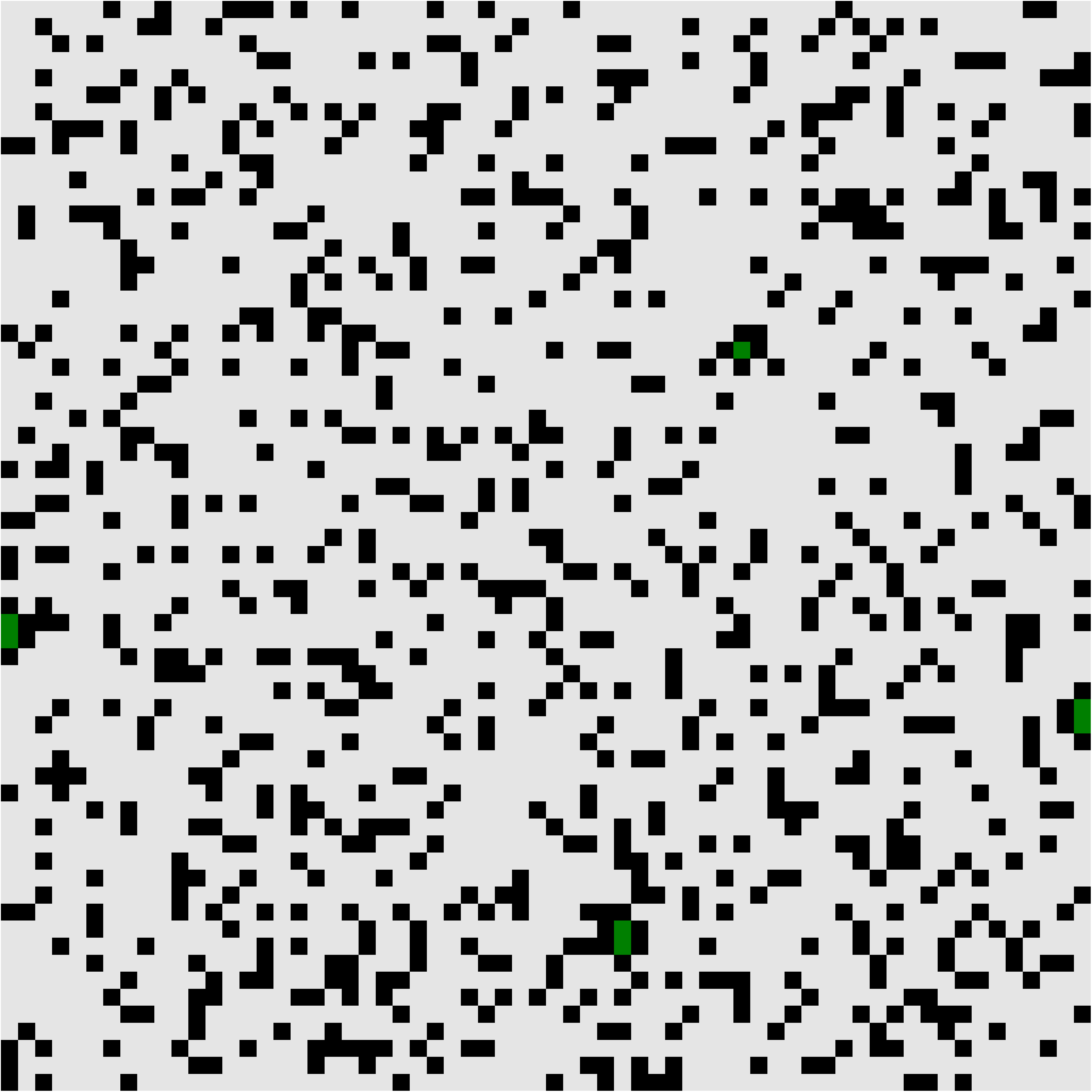}};
        \node[anchor=south west,inner sep=0] at (2.2,1.65) {\scriptsize $\bar{p}=0.1$};
      \end{tikzpicture}
    \end{minipage}
  \end{tabular}
  \caption{Executions in large fields.}
  \label{fig:result:3}
\end{figure}
}

\section{Evaluation}
\label{sec:evaluation}
The experiments aim at:
1)~showing the robustness of time-independent planning, i.e.,
even though delays of agents are unpredictable, their trajectories are kept efficient,
and,
2)~verifying the usefulness to use MAPF plans as hints during executions.

We used the MAPF-DP problem because it can emulate imperfect executions.
To adapt the time-independent model to MAPF-DP, rules of activation are required.
We repeated the following two phases:
1) Each agent $a_i$  in \extended is activated with probability $1-p_i$.
As a result, $a_i$ successfully moves to \head{i} with probability $1-p_i$ and becomes \contracted.
2) Pickup one agent in \contracted or in \requesting randomly then makes it activated, and repeat this until the configuration becomes stable, i.e, all agents in \contracted and \requesting do not change their states unless any agent in \extended is activated.
The rationale is that time required by executing atomic actions except for agents' moves is much smaller than that of the moves.
We regard a pair of these two phases as one timestep.
The strong termination was used to compare with existing two approaches for MAPF-DP: FSPs (Fully Synchronized Policies) and MCPs (Minimum Communication Policies).
The evaluation is based on the sum of cost (SOC) metric.
The simulator was developed in C++
\footnote{
  \url{https://github.com/Kei18/time-independent-planning}
},
and all experiments were run on a laptop with Intel Core i9 2.3GHz CPU and 16GB RAM.
In all settings, we tried $100$ repetitions.

\paragraph{Small Benchmarks}
First, we tested the time-independent planning in two small benchmarks shown in Fig.~\ref{fig:result:1}.
The delay probabilities $p_i$ were chosen uniformly at random from $[0, \bar{p}]$, where $\bar{p}$ is the upper bound of $p_i$.
The higher $\bar{p}$ means that agents delay often.
$\bar{p}\sequal 0$ corresponds to perfect executions without delays.
We manipulated $\bar{p}$.
\greedy, \algoname, and the enhanced one using MAPF plans as hints (\enhanced) were performed.
Those three were regarded to fail after $10000$ activations in total, implying occurring deadlocks or livelocks.
FSPs and MCPs were also tested as comparisons.
To execute FSPs, MCPs, and \enhanced, valid MAPF plans such that minimize SOC, which assumes following conflicts, were computed by an adapted version of Conflict-based Search (CBS)~\cite{sharon2015conflict}, prior to performing MAPF-DP.

The results are shown in Fig.~\ref{fig:result:1}.
Although \greedy failed in most cases due to deadlocks, \algoname(+) succeeded in all trials due to the deadlock-recovery and the reachability.
FSPs resulted in larger SOC compared to MCPs.
The results of \algoname(+) were competitive to those of MCPs.

\paragraph{Random Grid}
Next, we tested \algoname(+) using one scenario from MAPF benchmarks~\cite{stern2019def}, shown in Fig.~\ref{fig:result:2}.
We manipulated two factors:
1) changing $\bar{p}$ while fixing the number of agents ($\sequal 35$), and,
2) changing the number of agents while fixing $\bar{p}$ ($\sequal 0.5$).
When the number of agents increases, the probability that someone delays also increases.
We set sufficient upper bounds of activations.
FSPs and MCPs were also tested.
In this time, an adapted version of Enhanced CBS (ECBS)~\cite{barer2014suboptimal}, bounded sub-optimal MAPF solver, was used to obtain valid MAPF-DP plans, where the suboptimality was $1.1$.

Fig.~\ref{fig:result:2} shows the results.
The proposals succeeded in all cases even though the fields were not biconnected.
The results demonstrated the time-independent planning outputs robust executions while maintaining the small SOC in the severe environment as for timing assumptions.
In precisely, when $\bar{p}$ or the number of agents is small, MCPs was efficient, on the other hand, when these number increases, the time-independent plannings had an advantage.

\paragraph{Large Fields}
Finally, we tested proposals using large fields from the MAPF benchmarks, as shown in Fig.~\ref{fig:result:3}.
We respectively picked one scenario for each field, then tested while changing the number of agents.
$\bar{p}$ is fixed to $0.1$.
Regular ECBS (suboptimality: $1.1$) was used for \enhanced.
Note that \enhanced does not require MAPF plans that assume following conflicts.

The results (in Fig.~\ref{fig:result:3}) demonstrate the advantage of using MAPF plans in \algoname.
The proposed methods succeeded in all cases.
We observed a very large SOC in map den312d due to the existence of a bottleneck.
Such a structure critically affects execution delays.

\section{Conclusion}
\label{sec:conclusion}
This paper studied the online and distributed planning for multiple moving agents without timing assumptions.
We abstracted the reality of the execution as a transition system, then proposed time-independent planning, and \algoname as an implementation that ensures reachability.
Simulations in MAPF-DP demonstrate the robustness of time-independent planning and the usefulness of using MAPF plans as hints.

Future directions include the following:
1)~Develop time-independent planning with strong termination, e.g., by adapting Push and Swap~\cite{luna2011push} to our model.
2)~Address communication between agents explicitly.
In particular, this paper neglects delays caused by communication by treating interactions as a black box, and the next big challenge is there.

\section*{Acknowledgments}
We thank the anonymous reviewers for their many insightful comments and suggestions.
This work was partly supported by JSPS KAKENHI Grant Numbers~20J23011 and~20K11685, and Yoshida Scholarship Foundation.

{
\fontsize{9.0pt}{10.0pt} \selectfont 
\bibliography{ref}

\begin{thebibliography}{22}
\providecommand{\natexlab}[1]{#1}
\providecommand{\url}[1]{\texttt{#1}}
\providecommand{\urlprefix}{URL }
\expandafter\ifx\csname urlstyle\endcsname\relax
  \providecommand{\doi}[1]{doi:\discretionary{}{}{}#1}\else
  \providecommand{\doi}{doi:\discretionary{}{}{}\begingroup
  \urlstyle{rm}\Url}\fi

\bibitem[{Atzmon et~al.(2020)Atzmon, Stern, Felner, Wagner, Bart{\'a}k, and
  Zhou}]{atzmon2020robust}
Atzmon, D.; Stern, R.; Felner, A.; Wagner, G.; Bart{\'a}k, R.; and Zhou, N.-F.
  2020.
\newblock Robust multi-agent path finding and executing.
\newblock \emph{Journal of Artificial Intelligence Research} 67: 549--579.

\bibitem[{Barer et~al.(2014)Barer, Sharon, Stern, and
  Felner}]{barer2014suboptimal}
Barer, M.; Sharon, G.; Stern, R.; and Felner, A. 2014.
\newblock Suboptimal variants of the conflict-based search algorithm for the
  multi-agent pathfinding problem.
\newblock In \emph{Proc. Annual Symp. on Combinatorial Search (SoCS)}, 19--27.

\bibitem[{Derakhshandeh et~al.(2014)Derakhshandeh, Dolev, Gmyr, Richa,
  Scheideler, and Strothmann}]{derakhshandeh2014brief}
Derakhshandeh, Z.; Dolev, S.; Gmyr, R.; Richa, A.~W.; Scheideler, C.; and
  Strothmann, T. 2014.
\newblock Brief announcement: amoebot--a new model for programmable matter.
\newblock In \emph{Proc. Symp. on Parallelism in Algorithms and Architectures},
  220--222. ACM.

\bibitem[{Dresner and Stone(2008)}]{dresner2008multiagent}
Dresner, K.; and Stone, P. 2008.
\newblock A multiagent approach to autonomous intersection management.
\newblock \emph{Journal of artificial intelligence research} 31: 591--656.

\bibitem[{H{\"o}nig et~al.(2019)H{\"o}nig, Kiesel, Tinka, Durham, and
  Ayanian}]{honig2019persistent}
H{\"o}nig, W.; Kiesel, S.; Tinka, A.; Durham, J.~W.; and Ayanian, N. 2019.
\newblock Persistent and robust execution of mapf schedules in warehouses.
\newblock \emph{IEEE Robotics and Automation Letters} 4(2): 1125--1131.

\bibitem[{H{\"o}nig et~al.(2016)H{\"o}nig, Kumar, Cohen, Ma, Xu, Ayanian, and
  Koenig}]{honig2016multi}
H{\"o}nig, W.; Kumar, T.~S.; Cohen, L.; Ma, H.; Xu, H.; Ayanian, N.; and
  Koenig, S. 2016.
\newblock Multi-agent path finding with kinematic constraints.
\newblock In \emph{Proc. Intl. Conf. on Automated Planning and Scheduling
  (ICAPS)}, 477--485.

\bibitem[{Kim et~al.(2015)Kim, Campbell, Duong, Zhang, and
  Fainekos}]{kim2015discof+}
Kim, K.; Campbell, J.; Duong, W.; Zhang, Y.; and Fainekos, G. 2015.
\newblock DisCoF+: Asynchronous DisCoF with flexible decoupling for cooperative
  pathfinding in distributed systems.
\newblock In \emph{Proc. IEEE Intl. Conf. on Automation Science and Engineering
  (CASE)}, 369--376.

\bibitem[{Luna and Bekris(2011)}]{luna2011push}
Luna, R.; and Bekris, K.~E. 2011.
\newblock Push and swap: Fast cooperative path-finding with completeness
  guarantees.
\newblock In \emph{Proc. Intl. Joint Conf. on Artificial Intelligence (IJCAI)},
  294--300.

\bibitem[{Ma, Kumar, and Koenig(2017)}]{ma2017multi}
Ma, H.; Kumar, T.~S.; and Koenig, S. 2017.
\newblock Multi-agent path finding with delay probabilities.
\newblock In \emph{Proc. AAAI Conf. on Artificial Intelligence}, 3605--3612.

\bibitem[{Ma et~al.(2017)Ma, Li, Kumar, and Koenig}]{ma2017lifelong}
Ma, H.; Li, J.; Kumar, T.; and Koenig, S. 2017.
\newblock Lifelong multi-agent path finding for online pickup and delivery
  tasks.
\newblock In \emph{Proc. Intl. Joint Conf. on Autonomous Agents \& Multiagent
  Systems (AAMAS)}, 837--845.

\bibitem[{Ma et~al.(2016)Ma, Tovey, Sharon, Kumar, and Koenig}]{ma2016multi}
Ma, H.; Tovey, C.; Sharon, G.; Kumar, T.~S.; and Koenig, S. 2016.
\newblock Multi-agent path finding with payload transfers and the
  package-exchange robot-routing problem.
\newblock In \emph{Proc. AAAI Conf. on Artificial Intelligence}, 3166--3173.

\bibitem[{Okumura et~al.(2019)Okumura, Machida, D\'{e}fago, and
  Tamura}]{okumura2019priority}
Okumura, K.; Machida, M.; D\'{e}fago, X.; and Tamura, Y. 2019.
\newblock Priority Inheritance with Backtracking for Iterative Multi-agent Path
  Finding.
\newblock In \emph{Proc. Intl. Joint Conf. on Artificial Intelligence (IJCAI)},
  535--542.

\bibitem[{Okumura, Tamura, and
  D{\'{e}}fago(2020)}]{okumura2020causalpibt:techrep}
Okumura, K.; Tamura, Y.; and D{\'{e}}fago, X. 2020.
\newblock Time-Independent Planning for Multiple Moving Agents.
\newblock \emph{CoRR} abs/2005.13187.
\newblock \urlprefix\url{https://arxiv.org/abs/2005.13187}.

\bibitem[{Sha, Rajkumar, and Lehoczky(1990)}]{sha1990priority}
Sha, L.; Rajkumar, R.; and Lehoczky, J.~P. 1990.
\newblock Priority inheritance protocols: An approach to real-time
  synchronization.
\newblock \emph{IEEE Transactions on Computers} 39(9): 1175--1185.

\bibitem[{Sharon et~al.(2015)Sharon, Stern, Felner, and
  Sturtevant}]{sharon2015conflict}
Sharon, G.; Stern, R.; Felner, A.; and Sturtevant, N.~R. 2015.
\newblock Conflict-based search for optimal multi-agent pathfinding.
\newblock \emph{Artificial Intelligence} 219: 40--66.

\bibitem[{Stern(2019)}]{stern2019multi}
Stern, R. 2019.
\newblock Multi-Agent Path Finding--An Overview.
\newblock In \emph{Artificial Intelligence - 5th {RAAI} Summer School}, volume
  11866 of \emph{LNCS}, 96--115. Springer.

\bibitem[{Stern et~al.(2019)Stern, Sturtevant, Felner, Koenig, Ma, Walker, Li,
  Atzmon, Cohen, Kumar et~al.}]{stern2019def}
Stern, R.; Sturtevant, N.; Felner, A.; Koenig, S.; Ma, H.; Walker, T.; Li, J.;
  Atzmon, D.; Cohen, L.; Kumar, T.; et~al. 2019.
\newblock Multi-Agent Pathfinding: Definitions, Variants, and Benchmarks.
\newblock In \emph{Proc. Intl. Symp. on Combinatorial Search (SoCS)}, 151--159.

\bibitem[{Tel(2000)}]{tel2000introduction}
Tel, G. 2000.
\newblock \emph{Introduction to distributed algorithms}.
\newblock Cambridge university press.

\bibitem[{Wang and Rubenstein(2020)}]{wang2020walk}
Wang, H.; and Rubenstein, M. 2020.
\newblock Walk, Stop, Count, and Swap: Decentralized Multi-Agent Path Finding
  With Theoretical Guarantees.
\newblock \emph{IEEE Robotics and Automation Letters} 5(2): 1119--1126.

\bibitem[{Wiktor et~al.(2014)Wiktor, Scobee, Messenger, and
  Clark}]{wiktor2014decentralized}
Wiktor, A.; Scobee, D.; Messenger, S.; and Clark, C. 2014.
\newblock Decentralized and complete multi-robot motion planning in confined
  spaces.
\newblock In \emph{Proc. IEEE/RSJ Intl. Conf. on Intelligent Robots and Systems
  (IROS)}, 1168--1175.

\bibitem[{Wurman, D'Andrea, and Mountz(2008)}]{wurman2008coordinating}
Wurman, P.~R.; D'Andrea, R.; and Mountz, M. 2008.
\newblock Coordinating hundreds of cooperative, autonomous vehicles in
  warehouses.
\newblock \emph{AI magazine} 29(1): 9--20.

\bibitem[{Yu and LaValle(2013)}]{yu2013structure}
Yu, J.; and LaValle, S.~M. 2013.
\newblock Structure and Intractability of Optimal Multi-Robot Path Planning on
  Graphs.
\newblock In \emph{Proc. AAAI Conf. on Artificial Intelligence}.

\end{thebibliography}
}
\section{Appendix: Proof of Reachability}
In this section, we formally derive Theorem~\ref{theorem:reachability}.
We will use $V(G^\prime)$ and $E(G^\prime)$ as vertices and edges on a graph $G^\prime$ to clarify the graph that we mention.

At first, let us define a subgraph $H$ of $G$, called a \emph{parent-child graph}, to grasp the relationship between agents in \algoname.
Although $H$ is defined by a configuration formally, we just denote $H$ when we mention an invariant property of the parent-child graph.
\begin{definition}[parent-child graph]
Given a configuration $\gamma^k$, a \emph{parent-child graph} $H^k$ is a subgraph of $G$ with the following properties.
Let $B^k \subseteq A$ be a set of agents occupying only one node in $\gamma^k$, i.e., $B^k = \{ a_j \;|\; a_j \in A, \mode{j} \neq \extended \}$.
Then, $H^k = (V(H^k), E(H^k))$, where $V(H^k) = \{ \tail{j} \;|\; a_j \in B^k \}$ and $E(H^k) = \{ (\tail{j}, \tail{k}) \;|\; a_j \in B^k, a_k \in \children{j} \}$.
\end{definition}

The next lemma states that $H$ transits while keeping its graph property; forest.
\begin{lemma}
 \label{lemma:forest}
 $H$ is always a forest.
\end{lemma}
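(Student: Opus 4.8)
The plan is to prove the statement by induction on the sequence of configurations, showing that the forest property is an invariant preserved by every atomic action. First I would observe that in the initial configuration $\gamma^0$, all agents are \contracted and each agent is its own parent ($\parent{i} = a_i$, $\children{i} = \emptyset$), so $E(H^0) = \emptyset$ and $H^0$ is trivially a forest (an edgeless graph on a subset of vertices). Then I would examine each way an activation can modify the \parent{}/\children{} relationships or the set $B^k$ of non-\extended agents, and argue that none of them can create a cycle.

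The key steps, in order: (1) Enumerate the operations that touch tree structure. These are: \textsc{PriorityInheritance} (which via \textsc{ReleaseChildren} detaches $a_i$'s children, detaches $a_i$ from its old parent, then attaches $a_i$ as a child of $a_k$); \textsc{ReleaseChildren} and \textsc{Reset} called standalone (only remove edges); the backtracking block [Line~\ref{algo:causal:bt-start}--\ref{algo:causal:bt-end}] and the winner-preparation block [Line~\ref{algo:causal:if-winner}--\ref{algo:causal:if-winner-end}] (which detach but do not attach); and the \extended$\to$\contracted transition, which adds $a_i$ back into $B^k$ but with $\head{i}=\bot$, $\tail{i}\leftarrow$ old $\head{i}$, and — after \textsc{Reset} — no parent and no children, hence contributes an isolated vertex. (2) For the only edge-adding operation, \textsc{PriorityInheritance}, note that just before the attachment, $a_i$ has been made a root (\textsc{ReleaseChildren} removed all its outgoing child-edges, and $a_i$ was detached from \parent{i}), so the component of $a_i$ in $H$ is the single vertex $\tail{i}$; attaching $\tail{i}$ as a child of $\tail{k}$ therefore merges two trees across a single new edge and cannot close a cycle. (3) Also check that when an agent leaves $B^k$ (i.e., becomes \extended, in the winner block), it first cuts itself off from its parent and releases its children, so the vertex $\tail{i}$ is removed from a forest along with only incident edges — still a forest. (4) Conclude by induction that $H^k$ is a forest for all $k$.

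I should be a little careful about one subtlety: when \textsc{PriorityInheritance} sets $\parent{i}\leftarrow a_k$, I must confirm that $a_k \in B^k$ at that moment, so that the edge $(\tail{i},\tail{k})$ actually lies in $E(H^k)$ rather than being a dangling reference — but the guard defining $A''$ requires $\mode{k} = \requesting$, hence $a_k$ is non-\extended and indeed in $B^k$. Similarly I should note the invariant $a_i = \parent{j} \Leftrightarrow a_j \in \children{i}$ is maintained (the algorithm updates both endpoints of every parent/child change together, as flagged in the comments), so $E(H^k)$ is well-defined as an undirected edge set with no doubled or half-edges.

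The main obstacle I expect is not any single case but the bookkeeping of the \textsc{PriorityInheritance} procedure: one has to be sure that the three updates — \textsc{ReleaseChildren}, the removal of $a_i$ from $\children{\parent{i}}$, and $\parent{i}\leftarrow a_k$ — happen atomically within the single activation, so that the intermediate states (where the bidirectional invariant is momentarily broken) are never exposed as configurations $\gamma^k$. Since the paper stipulates that activations are atomic, this is legitimate, but it is the step where the argument is most easily mis-stated, so I would spell it out explicitly rather than gloss over it.
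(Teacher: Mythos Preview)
Your proposal is correct and follows essentially the same inductive case-analysis as the paper: both argue that \textsc{PriorityInheritance} first isolates $\tail{i}$ (via \textsc{ReleaseChildren} and removal from the old parent's child set) before attaching it under $\tail{k}$, and that every other operation only removes edges or re-inserts an already-isolated vertex. The paper adds one extra remark you omit---that agents in a single component share the same $\ptmp{}$, so $a_k$ (with strictly larger $\ptmp{}$) cannot lie in $a_i$'s pre-activation component---but your isolation argument already suffices without it.
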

\begin{proof}
  Given the initial configuration $\gamma^0$, $H^0$ is obviously a forest.
  Assume that $H^{k-1}$ is a forest, where $k > 0$.
  Assume further that $a_i$ is activated and the configuration transits from $\gamma^{k-1}$ to $\gamma^k$.
  We now show that $H^k$ is also a forest.
  There are three kinds of causes for changing edges in $H^k$ from $H^{k-1}$.

  The first case is the procedure \textsc{ReleaseChildren}.
  This removes all edges in $H^{k-1}$ from $\tail{i}$ to its children's tail.
  This is a decomposition of a subtree in $H^{k-1}$, thus, $H^k$ is still a forest.

  The second case is the procedure \textsc{PriorityInheritance}.
  If there exists an agent $a_k$ as a new parent, this procedure first $\tail{i}$ be isolated in $H^{k-1}$.
  Then, $a_i$ updates its parent, i.e., a new edge between \tail{i} and \tail{k} is added to $H^k$.
  This update still keeps $H^k$ being a forest.
  Note that cycles are not constructed;
  agents in one component have the same \ptmp{} due to priority inheritance.
  Priority inheritance occurs only between agents with different \ptmp{}.

  The third case is that $a_i$ transits from \requesting to \extended.
  Before being \extended, $a_i$ makes \tail{i} isolated.
  This is a decomposition of a subtree in $H^{k-1}$, thus, $H^k$ is kept as a forest.

  By induction, $H$ is always a forest.
\end{proof}

We next introduce a \emph{dominant agent/tree}.
Intuitively, it is an agent or a tree with the highest priority in one configuration.
Note that there exists a configuration without a dominant agent/tree, e.g., when the agent with highest \pori{} is \extended.
\begin{definition}[dominant agent/tree]
 Assume an agent $a_d$ in \\ \contracted such that $\pori{d} > \pori{j}, \forall a_j \in A \setminus \{ a_d \}$.
 $a_d$ is a \emph{dominant agent} when it becomes \requesting, until it becomes \extended.
 A component of the parent-child graph $H$ containing \tail{d} is a \emph{dominant tree} $T_d$.
\end{definition}

Since $H$ is a forest from Lemma~\ref{lemma:forest}, $T_d$ is a tree.
Subsequently, we state the condition where the dominant agent can move to its desired node, then derive reachability.
\begin{definition}[movable agent]
 An agent $a_m$ is said to be \emph{movable} in a given configuration when $\mode{m} = \requesting$ and $\lnot\occupied{\head{m}}$.
\end{definition}

\begin{lemma}
 \label{lemma:movable}
 If there is a dominant tree $T_d$ and a movable agent $a_m$ such that $\tail{m} \in V(T_d)$, then a dominant agent $a_d$ eventually becomes \extended without being \contracted.
\end{lemma}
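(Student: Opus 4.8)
The plan is to argue by induction on the distance $\delta$ in the tree $T_d$ from its root $\tail{d}$ to the tail $\tail{m}$ of the movable agent (by Lemma~\ref{lemma:forest}, $H$ is a forest, so $T_d$ is a tree and this $H$-path is unique), repeatedly pushing the ``movable frontier'' of $T_d$ one edge toward the root. Two preliminary facts carry most of the weight. (i) Every agent in $T_d$ has priority $\ptmp = \pori{d}$, the unique maximum over $A$; hence whenever such an agent competes for an unoccupied node in the winner-determination block [Lines~\ref{algo:causal:pass-ext}--\ref{algo:causal:winner-determination-end}], it is chosen as $a^\star$ and every lower-priority competitor is reset to \contracted, so a node on the frontier of $T_d$ cannot be claimed by an outsider. (ii) While $a_d$ is \requesting it can leave that mode only by becoming \extended: it cannot revert via the deadlock branch [Lines~\ref{algo:causal:backcont-s}--\ref{algo:causal:backcont-e}] because it is a root ($\parent{d}=a_d$), it cannot lose winner determination because it has the strictly highest priority, and it cannot be forced back by a child's backtrack [Lines~\ref{algo:causal:bt-start}--\ref{algo:causal:bt-end}] once one establishes the structural invariant that the agent occupying $\head{d}$, if it lies in $T_d$, is exactly $a_d$'s child on the $H$-path toward $a_m$ and is \requesting, hence not executing the \contracted backtracking code. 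The same reasoning, applied along that path, will protect the intermediate agents.

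For the base case $\delta = 0$ we get $\tail{m} = \tail{d}$, so $a_m = a_d$ (distinct agents occupy distinct nodes), i.e., $a_d$ is itself movable. By fairness $a_d$ is eventually activated while \requesting; \textsc{PriorityInheritance} changes nothing, the deadlock test is skipped, $\head{d}$ is unoccupied, so $a_d$ wins winner determination, isolates itself from the tree [Lines~\ref{algo:causal:if-winner}--\ref{algo:causal:if-winner-end}], and becomes \extended --- having never been \contracted.

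For the inductive step ($\delta \geq 1$), let $a_p$ be the parent of $a_m$ in $T_d$, at distance $\delta - 1$ from $\tail{d}$, with $\head{p} = \tail{m}$ and $\mode{p} = \requesting$. By fairness $a_m$ is activated and, being movable with the locally highest priority, becomes \extended and cleanly detaches from $a_p$; by fairness it is activated once more and becomes \contracted, vacating $v := \head{p}$. Invoking (i)--(ii) I would argue that, until $a_p$ is next activated, $v$ stays unoccupied (no outsider can win it against $a_p$, and no other \extended agent can be heading into it) and $a_p$ stays \requesting with $\head{p}=v$ inside the component of $H$ containing $\tail{d}$. Then, when $a_p$ is activated, the configuration has a dominant tree containing $a_d$ together with a movable agent $a_p$ at distance $\delta-1$, and the induction hypothesis yields that $a_d$ becomes \extended without being \contracted.

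The delicate part --- and the main obstacle --- is proving (i)--(ii) airtight under the asynchronous, interaction-laden semantics. In particular one needs a precise configuration invariant describing which agents may appear as children of a dominant-tree agent and in what mode: this is what excludes the ``stuck child forces the parent back to \contracted'' scenario, and also what excludes the movable agent's target node drifting into the searched set of its parent and thereby triggering the deadlock branch [Lines~\ref{algo:causal:backcont-s}--\ref{algo:causal:backcont-e}]. One must also check that every decomposition of $T_d$ --- whether caused by $a_m$ detaching or by unrelated lower-priority activity elsewhere --- leaves the sub-component rooted at $\tail{d}$ a genuine dominant tree with the same priority and connectivity properties, so that the induction is applicable at each stage. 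With those invariants in hand the remainder is routine fairness bookkeeping. Note that biconnectivity and the bound $|A| < |V|$ play no role here; they enter only in the companion argument guaranteeing that a movable agent must exist in $T_d$, which is what ultimately promotes this lemma to Theorem~\ref{theorem:reachability}.
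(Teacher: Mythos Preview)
Your proposal is correct and follows essentially the same route as the paper: both argue that the ``movable'' property propagates along the unique $T_d$-path from $a_m$ toward the root $a_d$, using that every agent in $T_d$ carries the globally maximal temporary priority and hence cannot be displaced by outsiders, with the conclusion reached by induction on the path. Your explicit induction on $\delta$ is a cleaner packaging of what the paper sketches informally (``By the same procedure\ldots\ By induction''); the configuration invariants you flag as delicate --- that path agents stay \requesting\ and that no stuck child or deadlock branch forces them back to \contracted\ --- are simply asserted in the paper as following from ``the construction of $T_d$'' rather than argued in detail.
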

\begin{proof}
 Only the procedure \textsc{PriorityInheritance} adds new edges to $H$ as shown in the proof of Lemma~\ref{lemma:forest}.
 Thus, all agents in $T_d$ have the same temporal priority, i.e., \ptmp{d}.
 This means that no agents outside of $T_d$ can disrupt $T_d$.

 Considering the construction of $T_d$, all agents on a path from $\tail{d}$ to $\tail{m}$ in $T_d$ are \requesting.
 The other agents in $T_d$ are \contracted and do nothing if they are activated.
 Moreover, if agents in \requesting in $T_d$ excluding $a_m$ are activated, they do nothing.
 As a result, only the activation of $a_m$ can update states of agents in $T_d$.

 Let $a_{m^\prime}$ be $\parent{m}$.
 Any agent $a_i$ in \requesting with $\head{i} = \head{m}$ will defeat since $\ptmp{m} = \ptmp{d}$.
 As a result, when $a_m$ is activated, it certainly becomes \extended and departs from $T_d$.
 $a_m$ becomes \contracted in finite activations.
 During this transition, any agent $a_i$ in \requesting with $\head{i} = \head{m^\prime}$ will defeat since $\ptmp{m^\prime} = \ptmp{d}$.
 Now, $\head{m^\prime}$ is not occupied by any agents.
 We have a new movable agent, namely, $a_{m^\prime}$.
 By the same procedure, $a_{m^\prime}$ moves to its head in finite time.
 By induction, $a_d$ eventually becomes \extended without being $\contracted$.
\end{proof}

\begin{lemma}
 \label{lemma:dominant}
 If $G$ is biconnected and $|A| < |V|$, a dominant agent $a_d$ eventually becomes \extended without being \contracted.
\end{lemma}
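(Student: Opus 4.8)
The plan is to derive this from Lemma~\ref{lemma:movable}: it is enough to show that, once $a_d$ has become \requesting, a movable agent with tail in $V(T_d)$ appears in finitely many activations \emph{and} that $a_d$ cannot be sent back to \contracted before that happens. The second part is precisely where biconnectedness enters.

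I would start by isolating the behaviour of $a_d$ itself. Since $\pori{d}$ is strictly maximal, $a_d$ permanently satisfies $\ptmp{d}=\pori{d}=\max_j\ptmp{j}$ and is the root of its tree; reading the \requesting branch of Algo.~\ref{algo:causal} line by line, one sees that \textsc{PriorityInheritance}, the ``$\parent{i}\neq a_i$'' test, winner determination, and every interaction leave $a_d$'s mode and $\head{d}$ untouched, so the \emph{only} event that can take $a_d$ out of \requesting without making it \extended is the back-propagation step [Line~\ref{algo:causal:bt-start}--\ref{algo:causal:bt-end}] performed by a child of $a_d$ that currently occupies $\head{d}$ and has run out of candidate nodes. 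Hence the whole statement reduces to: that child never exhausts its search before a movable agent surfaces in $T_d$.

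Next I would record two invariants. When $a_d$ enters \requesting it puts $\{\head{d},\tail{d}\}$ into $S_d$ [Line~\ref{algo:causal:update-cs}], and every \textsc{PriorityInheritance} sets $S_i\leftarrow S_{\parent{i}}\cup\{\head{i}\}$ while back-propagation only enlarges $S$; so $\tail{d}\in S_j$ for every $a_j\in T_d$, no agent of $T_d$ ever requests $\tail{d}$, and the (virtual) depth-first search carried out inside $T_d$ stays within $G-\tail{d}$. Because $G$ is biconnected, $G-\tail{d}$ is connected; because $|A|<|V|$ and tails are pairwise distinct, some vertex is not a tail, and — using that no agent of $T_d$ is \extended before the cascade — once the finitely many currently \extended agents finish their moves there is an unoccupied vertex of $G-\tail{d}$. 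Then I would run the standard DFS-termination argument: by fairness each agent on the frontier of $T_d$ is eventually activated and, when activated, either joins $T_d$ (advancing the search by one vertex), completes an \extended move (freeing a vertex), or back-propagates (strictly enlarging its parent's searched set inside the finite set $V(G-\tail{d})$). Since the searched region can only grow and is bounded, in finitely many activations either some frontier agent of $T_d$ finds an unoccupied candidate node — a movable agent with tail in $V(T_d)$ — or the search would have to cover a connected subset of $G-\tail{d}$ containing every reachable vertex while seeing no free vertex, contradicting connectedness of $G-\tail{d}$ together with the existence of a free vertex there. In particular the child on $\head{d}$ cannot exhaust its subtree first, so $a_d$ stays \requesting throughout, and Lemma~\ref{lemma:movable} finishes the argument.

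The step I expect to be the real obstacle is making the DFS-termination argument airtight under full asynchrony: lower-priority agents outside $T_d$, and \extended agents completing their moves, can repeatedly shift the ``free'' vertex around, so one must argue that the monotone growth of $\bigcup_{a_j\in T_d}S_j$ restricted to $G-\tail{d}$, together with fairness and the fact that all agents of $T_d$ carry the top priority $\ptmp{d}$ and hence cannot be disrupted from outside, still forces such a free vertex to be ``caught'' before the searched set fills $V(G-\tail{d})$. Careful bookkeeping is also needed around the $\head{i}=\bot$ cases in the $S$-updates, around dead branches left hanging after a back-propagation, and around agents momentarily in \extended.
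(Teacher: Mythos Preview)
Your proposal is correct and follows essentially the same route as the paper: both arguments hinge on the observation that the depth-first exploration rooted at $a_d$ is confined to $G-\tail{d}$, which is connected by biconnectedness, so with $|A|<|V|$ the search must hit an unoccupied vertex (yielding a movable agent in $T_d$) before it can exhaust and backtrack all the way to $a_d$, after which Lemma~\ref{lemma:movable} applies. The paper packages the first half as a short contradiction (``if $a_d$ were sent back to \contracted then $V(T_d)$ would span $G$, forcing $|A|=|V|$'') and is terser about the asynchrony bookkeeping you flag, but the substance is the same.
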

\begin{proof}
 First, we show that $a_d$ does not become \contracted.
 Assume by contradiction that $a_d$ becomes \contracted.
 Considering the process of the construction of the dominant tree $T_d$, all agents $a_i$ in $T_d$ excluding $a_d$ are \contracted and $C_i = \emptyset$.
 This implies that there exists no neighbor nodes of $V(T_d) \setminus \{ \tail{d} \}$ in $V(G)$.
 Since $G$ is biconnected, although \tail{d} is removed from $G$, there exists a path between any two nodes in $V(G)$.
 Thus, $V(T_d)$ must spans $G$, i.e., $|V| = |A|$.
 This is a contradiction, so, $a_d$ never becomes \contracted.

 Since $T_d$ tries to expand so that it spans $G$, and $|A| < |V|$, we have eventually a movable agent in $T_d$.
 According to lemma~\ref{lemma:movable}, $a_d$ eventually becomes \extended.
\end{proof}

\setcounter{theorem}{1}
\begin{theorem}[reachability]
 \algoname ensures reachability in biconnected graphs if $|A| < |V|$.
\end{theorem}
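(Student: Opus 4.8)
The plan is to bootstrap from Lemma~\ref{lemma:dominant}, which already guarantees that a dominant agent completes one move, and then layer the dynamic priority scheme on top of it to propagate progress to every agent.

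First I would argue that the agent $a_d$ owning the globally highest original priority \pori{} can always advance one step toward any chosen neighbor. If $a_d$ is \extended, a single (fair) activation turns it \contracted. If $a_d$ is \contracted, then since \pori{d} is strictly larger than every other \pori{} and \ptmp{} never exceeds \pori{} for any agent, the guard of \textsc{PriorityInheritance} can never fire for $a_d$; hence, when activated, $a_d$ simply picks any $u \in \Neigh{\tail{d}}$ as \head{d} and becomes \requesting, at which point it is a dominant agent in the sense of the definition. Lemma~\ref{lemma:dominant} (using biconnectedness and $|A| < |V|$, both invariant) then says $a_d$ becomes \extended without reverting to \contracted, and a further activation makes it \contracted with $\tail{d} = u$. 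So in finitely many activations $a_d$ relocates to an arbitrary neighbor.

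Second, I would chain these single steps along a shortest path in $G$ from $a_d$'s current node to its goal $g_d$; since $G$ is connected such a path exists and is finite, so $a_d$ reaches a configuration with $\mode{d} = \contracted \land \tail{d} = g_d$ in finitely many activations. At that moment the priority-update rule lowers \pori{d} below the priority of every agent that has not yet reached its goal, so $a_d$ is no longer the global maximum among unfinished agents. I would then close by induction on the set $U$ of agents that have never visited their goal: if $U = \emptyset$, reachability holds; otherwise uniqueness and the update rule of \pori{} ensure that, after $a_d$ finishes, the member of $U$ with the highest \pori{} is in fact the global priority maximum, so the argument above applies to it and it reaches its goal in finitely many activations, shrinking $U$. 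Since $U$ is finite, after finitely many rounds every agent has been at its goal at least once, which is exactly weak termination, i.e., reachability. Theorem~\ref{theorem:deadlock-recovery} is implicitly needed so lower-priority agents cannot stall the dominant agent forever, but this is already subsumed by Lemma~\ref{lemma:dominant}.

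I expect the main obstacle to be the bookkeeping around the dynamic priorities: one must argue carefully that the priority ``baton'' is actually handed to an unfinished agent (never re-grabbed by a finished one), that exactly one agent is the strict global maximum in the configurations where the dominant-agent argument is invoked, and that fairness of activation turns every ``eventually / in finitely many activations'' claim above into a genuine guarantee. The geometric and graph-theoretic heart — that a highest-priority agent cannot be blocked in a biconnected graph with a free vertex — is already established in Lemmas~\ref{lemma:forest}--\ref{lemma:dominant}, so the remaining work is essentially this well-foundedness argument over the finite agent set.
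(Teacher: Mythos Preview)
Your proposal is correct and follows essentially the same route as the paper's proof: invoke Lemma~\ref{lemma:dominant} to show the agent with globally highest $\pori{}$ can reach any chosen neighbor, chain these steps along a shortest path to its goal, drop its priority upon arrival, and induct over the set of agents not yet having visited their goals. One small slip in wording: you write that ``$\ptmp{}$ never exceeds $\pori{}$ for any agent,'' but the paper's invariant is the opposite, $\ptmp{i} \geq \pori{i}$; what you actually need (and what makes your argument work) is that no $\ptmp{j}$ can exceed the \emph{global} maximum $\pori{d}$, which holds because every $\ptmp{}$ value is inherited from some agent's $\pori{}$.
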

\begin{proof}
 Let an agent be $a_h$ with the highest \pori{}.
 If $a_h$ is \contracted, this agent does not inherit any priority.
 Thus, $a_h$ can be \requesting so that \head{h} is any neighbor nodes of \tail{h}.
 If $a_h$ is \requesting, $a_h$ is a dominant agent.
 According to lemma~\ref{lemma:dominant}, $a_h$ eventually moves to \head{h}.
 These imply that $a_h$ can move to an arbitrary neighbor node in finite activations.
 By this, $a_h$ moves along the shortest path to its goal, then drops its priority $\pori{h}$.

 Due to the above mechanism, an agent that has not reached its goal eventually gets the highest \pori{}, then it starts moving along its shortest path to its goal.
 This satisfies reachability.
\end{proof}

It is interesting to note that the necessary condition to ensure reachability in \algoname is more restrictive than in PIBT. This is due to the prohibition of rotations (cyclic conflicts) by the time-independent model.

\end{document}